\documentclass[11pt]{article}

\usepackage[letterpaper,margin=1in]{geometry}
\usepackage{amsmath,amssymb,mathtools,bm}
\usepackage{booktabs,array,tabularx}
\usepackage{microtype}
\usepackage{graphicx}
\usepackage[font=small,labelfont=bf]{caption}
\usepackage[numbers,sort&compress]{natbib}
\usepackage{url}
\usepackage{xcolor}
\usepackage[hidelinks]{hyperref}

\newtheorem{theorem}{Theorem}

\newtheorem{proposition}[theorem]{Proposition}

\newtheorem{remark}{Remark}
\newtheorem{assumption}{Assumption}


\title{A Theory-Guided Advanced Regulatory Control Synthesis for Cooling-Limited Exothermic Semi-Batch Reactors}
\author{Chenchen Zhou and Jose Matias\\[0.5ex]
\small Chemical and Biochemical Reactor Engineering and Safety (CREaS), KU Leuven\\
\small Jan Pieter de Nayerlaan 5, Sint-Katelijne-Waver 2860, Belgium\\
\small \texttt{chenchen.zhou@kuleuven.be}; \texttt{jose.matias@kuleuven.be}}
\date{}

\begin{document}
\maketitle

\begin{abstract}
This paper studies theory-guided advanced regulatory control (ARC) synthesis for cooling-limited exothermic semi-batch reactors, whose productivity and thermal safety are governed by changing active constraints. Industrial ARC uses feedback loops, cascades, selectors, feedforward/override logic, and valve-position elements, but signal selection, pairing, interconnection, and tuning remain heuristic. Nonlinear model predictive control (NMPC) gives a systematic constrained-operation workflow, but requires a maintained nonlinear model, state estimator, and online optimizer. We combine finite-horizon minimum-time optimality with local safety analysis to develop a systematic analysis-to-architecture ARC synthesis workflow for cooling-limited semi-batch reactors. Under stated assumptions, the workflow translates boundary-seeking optimality into a cooling-demand valve-position-control (VPC) architecture and translates local safety requirements into near-boundary tuning rules. On a reduced benchmark and an industrial-scale polymerization, ARC is nominally competitive with an implemented nominal-model output-feedback nonlinear model predictive control (OF-NMPC) benchmark using extended Kalman filter (EKF) state estimation. In the studied adverse parameter mismatch and unmodeled fault scenarios, ARC keeps temperature-limit violation at 0\%, whereas OF-NMPC either violates the limit or fails to complete the batch.
\end{abstract}

\noindent\textbf{Keywords:} Advanced regulatory control, semi-batch reactors, thermal safety, optimal control, endpoint safety screen.

\section{Introduction}\label{sec:intro}
Exothermic semi-batch reactors operate under a tight trade-off between productivity and thermal safety. Thermal runaway begins when heat generation exceeds the removal capacity of the reactor and utility system, causing temperature and pressure to rise uncontrollably; major accidents have repeatedly been linked to loss of cooling or poorly characterized reactive hazards \cite{Kummer2021,andriani2025}. Semi-batch operation is especially difficult because reactants accumulate, heat release changes with the batch state, and productivity is often improved by moving toward thermally active conditions \cite{ni2016,stoessel2021,kummer2020a,soroush1992,lu2017batch}. The control problem is therefore not only temperature regulation, but productive operation near a thermal limit without losing safety or product quality.

Industrial plants usually address this task with advanced regulatory control (ARC): coordinated PID, cascade, ratio, selector, feedforward, and supervisory layers implemented on plant automation platforms such as distributed control systems (DCSs) and programmable logic controllers (PLCs) \cite{samad2020a,Wade2004ARC,Skogestad2023ARC,Gous2023ARCLevel}. ARC is transparent and maintainable, but exothermic semi-batch polymerization makes its design challenging. Startup, reaction, and finishing phases are not separated by reliable fixed time schedules; their boundaries depend on holdup, heat release, pressure inventory, cooling availability, and model--plant mismatch. A loop architecture or pairing that is reasonable in one phase can become ineffective or unsafe later. Switching selectors or supervisory layers can also move the thermal load to a different actuator while integrators and saturated elements retain the memory from previous control configurations. The design problem is therefore architectural as well as parametric: controlled variables (CVs), CV--MV pairings, controller elements, and tuning rules must be chosen as a coupled design.

Nonlinear model predictive control (NMPC) provides the standard systematic alternative because it embeds the model, objectives, and constraints in an online optimal-control problem \cite{Rawlings2017MPCbook}; it has been demonstrated for temperature tracking and endpoint-property control in semi-batch and polymerization applications \cite{Nagy2005NMPCBatch,kummer2020a,Lucia2013SemiBatchNMPC}. Its practical burden remains high: nonlinear state estimation, repeated online optimization, model-mismatch sensitivity, and lifecycle maintenance \cite{Nikolaou2001MPCNeeds,Skogestad2023ARC}. For industrial batch units with changing recipes or limited identification data, maintaining this predictive-control stack can be harder than maintaining an ARC architecture. The design gap is therefore specific: ARC has deployable elements, but lacks a theory-guided design approach for cooling-limited batch operation with changing active constraints, actuator saturation, and phase-dependent heat release.

This paper studies systematic ARC design for cooling-limited exothermic semi-batch reactors whose productivity is governed by an active cooling boundary. The active cooling boundary is translated into a deployable economic VPC design: optimality analysis determines the cooling-demand CV, feed-side MV pairing, and VPC architecture, while local safety analysis determines the near-boundary controller elements and tuning requirements. This treats ARC architecture as a synthesis object rather than only a loop-selection or tuning exercise. To the best of our knowledge, it is among the first demonstrations in which optimality analysis and local safety analysis are used jointly to synthesize a deployable ARC architecture for cooling-limited semi-batch operation. This addresses the recent call for theory-guided ARC design beyond heuristic architecture selection \cite{Skogestad2023ARC}, and extends the spirit of self-optimizing and measurement-based dynamic optimization \cite{Skogestad2000SelfOptimizing,bonvin1998,srinivasanDynamicOptimizationBatch2003,srinivasanDynamicOptimizationBatch2003a} to a deployable regulatory-control design for a difficult dynamic process class.

The procedure is guided by a direct engineering intuition. During the reaction phase of a cooling-limited batch, unused cooling capacity usually means unused time. If the reactor is below its temperature/quality boundary while cooling is still applied, the batch can often be shortened by moving closer to that boundary. Once the boundary is reached, the useful economic signal is no longer the temperature error itself, but the cooling demand required to remain on the thermal boundary. This turns the optimal-control result into a regulatory-control design rule: keep the temperature/quality loop responsible for safety, and let a feed-related economic loop regulate the cooling-demand signal near its capacity limit.

Overall, the paper gives a theory-guided ARC synthesis procedure for cooling-limited semi-batch operation:
\begin{enumerate}
  \item \textbf{Boundary-seeking optimality principle.}
  We show that, under the stated assumptions, minimum-time operation seeks the active cooling boundary rather than an arbitrary temperature-tracking or feed-rate target.

  \item \textbf{Optimization-derived ARC architecture.}
  We translate this boundary-seeking result into cooling-demand CV selection, feed-side CV--MV pairing, and an economic valve-position-control (VPC) architecture.

  \item \textbf{Safety-oriented endpoint tuning screen.}
  We derive finite-window endpoint inequalities that link VPC-error dynamics and cumulative feed over a short window to implementable requirements for near-boundary VPC tuning and anti-windup behavior.
\end{enumerate}
We evaluate the workflow on a reduced isothermal benchmark, where the boundary-seeking feed profile and endpoint-test quantities can be checked directly, and on an industrial semi-batch polymerization case study \cite{li2025benchmark,zhou2026RealtimeNMPC}, where the practical ARC realization is assessed under nominal, mismatch, and fault scenarios.

The rest of the paper is organized as follows. Section~\ref{sec:Preliminaries} states the reactor model, constraints, and design scope. Section~\ref{sec:boundary-optimality} derives the boundary-seeking optimality result, and Section~\ref{sec:arc-structure} translates it into the ARC/VPC architecture. Section~\ref{sec:safety-analysis} develops the finite-window endpoint tuning screen. Sections~\ref{sec:reduced-benchmark-sec} and~\ref{sec:industrial-case} evaluate the workflow on the reduced benchmark and industrial polymerization case study. Section~\ref{sec:discussion} discusses scope, implications, and conclusions.

\section{Problem Formulation and Background}
\label{sec:Preliminaries}

\subsection{Process Model}
We consider an open, non-isothermal semi-batch reactor without an outlet stream. The industrial case study later uses a polymerization reactor, but the dynamical structure is generic to liquid-phase semi-batch reactions \cite{rodrigues2015,bonvin1998}. The notation is unit-independent; numerical scaling is specified only in the case study. The state consists of species holdups and reactor temperature,
$\mathbf n(t)\in\mathbb R^{n_s}$, $T(t)\in\mathbb R$, where $n_s$ is the number of species, $\mathbf n$ collects the species holdups, and $T$ is the reactor temperature. 
The dynamics are described by the mass and energy balances
\begin{align}
\dot{\mathbf n}(t) &= \mathbf N \ \mathbf r\bigl(\mathbf n(t),T(t)\bigr) \;+\; \mathbf W_{\mathrm{in}}\,\mathbf u_{\mathrm{in}}(t),
\label{eq:mass-balance}\\
C_P\!\bigl(\mathbf n(t),T(t)\bigr)\,\dot T(t)
&= G\!\bigl(\mathbf n(t),T(t)\bigr) + H\!\bigl(\mathbf u_{\mathrm{in}}(t)\bigr) \nonumber\\
&\quad - q_{\mathrm{ex}}\!\bigl(T(t),F_{\mathrm{cw}}(t)\bigr).
\label{eq:energy-balance}
\end{align}
Here $t$ denotes time, $\mathbf r(\mathbf n,T)\in\mathbb R^{n_r}$ is the reaction-rate vector, and $\mathbf N\in\mathbb R^{n_s\times n_r}$ encodes stoichiometry. 
The manipulated inlet-feed vector $\mathbf u_{\mathrm{in}}(t)\in\mathbb R^{n_u}$ is constrained by equipment limits:
$
\mathbf 0 \le \mathbf u_{\mathrm{in}}(t) \le \mathbf u^{\max},
$
and the constant matrix $\mathbf W_{\mathrm{in}}\in\mathbb R^{n_s\times n_u}$ maps the chosen feed coordinates to species-holdup rates, including the corresponding inlet compositions.

In \eqref{eq:energy-balance}, $C_P(\mathbf n,T)$ is the effective heat capacity, $G(\mathbf n,T)$ is reaction heat release, $H(\mathbf u_{\mathrm{in}})$ is inlet sensible heat, and $q_{\mathrm{ex}}(T,F_{\mathrm{cw}})$ is utility heat removal. The scalar $F_{\mathrm{cw}}(t)\in[0,F_{\mathrm{cw}}^{\max}]$ is the cooling-water flow rate. The analysis uses only the input-affine feed term, finite heat-removal capacity, and the regularity and monotonicity assumptions stated in section \ref{sec:boundary-optimality}.

\subsection{Control task}\label{subsec:control-task}

The operating objective is to produce the required on-spec material in minimum time while maintaining safety and product quality. We model one campaign on a finite horizon $[0,t_f]$, with $t_f>0$ as a decision variable.

Let $w_P\in\mathbb R^{n_r}_{\ge 0}$ weight the reaction channels that create on-spec material. The cumulative on-spec production along a trajectory is then
\begin{equation}\label{eq:Np-path}
N_P(t)\ :=\ \int_0^{t} w_P^{\top}\mathbf r\big(\mathbf n(\tau),T(\tau)\big)\,\mathrm d\tau.
\end{equation}
We require the campaign to deliver at least a prescribed amount $N_P^{\mathrm{req}}$ of on-spec product,
\begin{equation}\label{eq:ct-Np-req}
N_P(t_f)\ \ge\ N_P^{\mathrm{req}}.
\end{equation}
The state $(\mathbf n(\cdot),T(\cdot))$ evolves according to \eqref{eq:mass-balance}--\eqref{eq:energy-balance}, and the admissible control inputs are the inlet feeds $\mathbf u_{\mathrm{in}}(t)$ and the cooling-water flow rate $F_{\mathrm{cw}}(t)$, both assumed measurable and bounded.

The operating policy must reconcile three requirements over $[0,t_f]$: operating path constraints, product quality, and campaign endpoint logic.

\medskip\noindent
\textbf{(i) Operating path constraints.}
Physical limits and operability envelopes constrain the manipulated feeds, the utility flow, and the reactor temperature:
\begin{equation}\label{eq:ct-bounds}
\begin{aligned}
\mathbf 0 &\le \mathbf u_{\mathrm{in}}(t) \le \mathbf u^{\max},\quad
0\le F_{\mathrm{cw}}(t)\le F_{\mathrm{cw}}^{\max},\\
T^{\min} &\le T(t)\le T^{\mathrm{safe}}_{\max},\quad t\in[0,t_f].
\end{aligned}
\end{equation}
Together these path constraints define the admissible operating envelope by enforcing actuator limits, cooling-capacity limits, and a thermal-safety upper bound.

\medskip\noindent
\textbf{(ii) Product-quality requirement and thermal surrogate.}
At the plant level, product release is governed by grade specifications, denoted abstractly by $\mathcal Q_{\mathrm{prop}}$, involving terminal composition, residual monomer, molecular-weight distribution, viscosity, and appearance. These properties can depend on full reaction and thermal histories and are not modeled explicitly in the ARC synthesis. We use the measured reactor-temperature trajectory as the online enforceable quality surrogate,
\begin{equation}\label{eq:ct-tight-band-ref}
\big|T(t)-T_{\mathrm{ref}}(t)\big|\ \le\ \Delta_T,\qquad t\in[0,t_f],
\end{equation}
where $\Delta_T>0$ is the allowed temperature tolerance.

Because reaction rates increase with temperature, a productivity-maximizing controller tends to drive the temperature toward its upper bound; the lower bound in \eqref{eq:ct-tight-band-ref} is therefore usually inactive. The online quality requirement then reduces to preventing temperature from exceeding the recipe target. We use the one-sided constraint
\begin{equation}\label{eq:ct-temp-cap}
T(t)\ \le\ T_{\mathrm{ref}}(t),\qquad t\in[0,t_f].
\end{equation}
For notational simplicity, we hereafter omit the explicit tolerance $\Delta_T$ by interpreting $T_{\mathrm{ref}}(t)$ as the effective upper quality limit (i.e., we define $T_{\mathrm{ref}}(t) := T_{\mathrm{ref}}^{\mathrm{nom}}(t) + \Delta_T$).

\medskip\noindent
\textbf{(iii) Endpoint logic.}
The batch is complete only after the scheduled material has been charged and the remaining reactive inventory is low enough that continuing reaction is negligible. In an ideal model this combines the production target $N_P(t_f)\ge N_P^{\mathrm{req}}$ in \eqref{eq:ct-Np-req} with a conversion set
\begin{equation}\label{eq:ct-conv}
\mathbf n(t_f) \ \in\ \mathcal C_{\mathrm{conv}},
\end{equation}
encoding targets on overall conversion and limits on residual monomer, initiator, or other species. Because these compositional criteria are not generally measured online, plant operation relies on endpoint signals indicating ``material charged'' and ``reactive material essentially depleted''.

The material-charged condition corresponds to the scheduled feeds and induces the integral constraint
\begin{equation}\label{eq:ct-dose}
\int_0^{t_f}\!\mathbf u_{\mathrm{in}}(t)\,\mathrm dt\ =\ \mathbf U^{\max},
\end{equation}
where $\mathbf U^{\max}\in\mathbb R^{n_u}_{\ge 0}$ collects the total feed doses for each inlet stream that is fully charged during the batch.

For analysis, the depleted endpoint is written in terms of the residual reactant inventory
\begin{equation}\label{eq:residual-reactant}
n_R(t):=w_R^\top \mathbf n(t),
\end{equation} 
where $w_R \ge 0$ aggregates the limiting reactants (e.g., monomer). The depletion requirement is
\begin{equation}\label{eq:ct-conv-monomer}
  n_R(t_f)\le n_R^{\min}.
\end{equation}
This condition is the model-level endpoint signal used in the analysis. In the industrial polymerization case, the corresponding online endpoint is the low-pressure finishing target after monomer feeding has stopped; in other systems it may be vanishing cooling demand during a hold or an online composition measurement. In all cases, the signal marks negligible continuing reaction and heat release.

Together, \eqref{eq:ct-dose} and \eqref{eq:ct-conv-monomer} encode the charge-and-deplete endpoint logic used as the model-level surrogate for the ideal production and conversion requirements \eqref{eq:ct-Np-req} and \eqref{eq:ct-conv}.

\subsection{Optimal control formulations}

The batch optimization problem treats the quality bound \eqref{eq:ct-temp-cap}, dose \eqref{eq:ct-dose}, and depletion condition \eqref{eq:ct-conv-monomer} as hard constraints. The objective is to minimize the batch time required to satisfy them:
\begin{equation}\label{prob:ct-industrial}
\begin{aligned}
\min_{\mathbf u_{\mathrm{in}}(\cdot),\,F_{\mathrm{cw}}(\cdot),\,t_f}\quad &\ 
t_f \\[1mm]
\text{s.t.}\quad & \text{dynamics \eqref{eq:mass-balance}--\eqref{eq:energy-balance}},\quad t\in[0,t_f],\\
& \mathbf 0 \le \mathbf u_{\mathrm{in}}(t) \le \mathbf u^{\max},\\
& 0\le F_{\mathrm{cw}}(t)\le F_{\mathrm{cw}}^{\max},\\
& T^{\min}\le T(t)\le T^{\mathrm{safe}}_{\max},\\
& T(t) \le T_{\mathrm{ref}}(t),\qquad t\in[0,\,t_f],\\
& \int_0^{t_f}\!\mathbf u_{\mathrm{in}}(t)\,\mathrm dt\ =\ \mathbf U^{\max},\\
& n_R(t_f)\ \le\ n_R^{\min}
\end{aligned}
\end{equation}

\begin{remark}[Generality of the Formulation]
The dynamic optimization problem \eqref{prob:ct-industrial} captures a structure common to many safety-critical semi-batch reactions. It follows the measurement-based optimization viewpoint of Srinivasan et al. \cite{srinivasanDynamicOptimizationBatch2003a,srinivasanDynamicOptimizationBatch2003}: active path constraints, here temperature and cooling capacity, determine the economically relevant operating policy under uncertainty.
\end{remark}

\subsection{ARC Synthesis Problem}

The optimal-control problem above defines the regulatory-control design problem considered in this paper. Following the self-optimizing-control viewpoint \cite{Skogestad2000SelfOptimizing,Skogestad2004CSD}, the first design question is to choose a controlled variable whose regulation at a constant setpoint gives near-optimal operation despite disturbances and model mismatch. For the present semi-batch task, the synthesis must also decide the manipulated-variable pairing, the ARC architecture elements needed to respect the path and endpoint constraints in \eqref{eq:ct-bounds}--\eqref{eq:ct-conv-monomer}, and the controller parameters, including setpoints, gains, and integral bounds.

Candidate controlled variables are built from standard industrial measurements and regulatory signals: reactor temperature, pressure when a gas phase is present, flow rates, utility-actuator signals, and setpoints. The intended design therefore avoids dependence on model-based state estimates or specialized online devices such as composition analyzers, online calorimetry, or molecular-weight sensors. The desired output is a deployable ARC architecture and tuning rule that uses these signals to operate productively while maintaining safety and product quality.

\section{Optimality Analysis for ARC Synthesis}
\label{sec:boundary-optimality}

This section analyzes optimal solutions of problem~\eqref{prob:ct-industrial} to obtain the optimality basis for ARC architecture design. The arguments concern regular cooled arcs of feasible semi-batch campaigns. We first state the assumptions used in the local optimality analysis.

\subsection{Assumptions for the optimality analysis}

\begin{assumption}[Regular process and heat-removal maps]\label{ass:regularity}
   \quad
   \begin{enumerate}
      \item The maps $\mathbf r$, $C_P$, $G$, $H$, and $q_{\mathrm{ex}}$ are $C^1$, with $C_P(\mathbf n,T)\ge C_{\min}>0$, $G(\mathbf n,T)\ge 0$, and $H(\mathbf u_{\mathrm{in}})$ affine in $\mathbf u_{\mathrm{in}}$.
      \item The heat-removal map $q_{\mathrm{ex}}(T,F_{\mathrm{cw}})$ is strictly increasing in $F_{\mathrm{cw}}$ for fixed $T$ and strictly increasing in $T$ for fixed $F_{\mathrm{cw}}$.
   \end{enumerate}
\end{assumption}

This is the usual nonsingular jacket-cooling setting used in semi-batch dynamic optimization and thermal-safety analysis \cite{srinivasanDynamicOptimizationBatch2003,srinivasanDynamicOptimizationBatch2003a,hungerbuhlerThermalProcessSafety2021,sneeCharacterisationExothermicReaction1992}.

\begin{assumption}[Feasibility of the industrial campaign]\label{ass:feasibility}
There exists a nonempty set of initial conditions $\mathcal X_0$ such that, for every $(\mathbf n^0,T^0)\in\mathcal X_0$, problem~\eqref{prob:ct-industrial} admits at least one feasible solution satisfying the dynamics and constraints \eqref{eq:ct-bounds}--\eqref{eq:ct-conv-monomer}.
\end{assumption}

This defines the campaign set for which an admissible recipe exists and the optimality target is well posed.

\begin{assumption}[Positive thermal-depletion sensitivity]\label{ass:depletion}
The reactant depletion rate $\chi(\mathbf n,T):=-w_R^\top \mathbf N\,\mathbf r(\mathbf n,T)$ is $C^1$ and strictly increasing in $T$ for fixed $\mathbf n$. On any compact finite interval where the trajectory has a positive margin to the temperature bound and cooling is active, a sufficiently small cooling reduction with fixed feed remains inside the local admissible trajectory tube and increases the integrated depletion to first order.
\end{assumption}
This is the local Arrhenius-type condition used in the re-timing argument: a slightly warmer cooled arc consumes reactive inventory faster while remaining feasible \cite{smithTemperaturedependenceElementaryReaction2008}.

\begin{assumption}[Feasible local feed re-timing]\label{ass:dose-slack}
There exists $\delta_u>0$ such that every feed component constrained by \eqref{eq:ct-dose} has a positive-measure set $\mathcal J_i\subset[0,t_f^\star]$ on which $u_{i,\mathrm{in}}^\star(t)\le u_i^{\max}-\delta_u$. The $O(\tau)$ feed adjustment caused by local horizon shortening can be placed on these sets without violating input bounds or inactive path constraints in the local tube.
\end{assumption}

This captures the unsaturated feed intervals available in semi-batch campaigns with distinct charging and conversion/holding periods, where small feed re-timing can be accommodated without activating additional constraints \cite{saez-pardoModelingSolutionPolymerization2024}.

\subsection{Active cooling below the temperature bound}

We first consider intervals on which the cooling input is positive while the temperature constraint is inactive, i.e., $T(t)<T_{\mathrm{ref}}(t)$. The following proposition gives the corresponding local optimality consequence.

\begin{proposition}[No sustained active cooling below the temperature bound]\label{prop:temp-boundary}
Under Assumptions~\ref{ass:regularity}--\ref{ass:dose-slack}, an optimal solution of problem~\eqref{prob:ct-industrial} cannot contain a positive-measure interval $\mathcal I\subset[0,t_f^\star ]$ on which $T^\star (t)<T_{\mathrm{ref}}(t)$ while $F_{\mathrm{cw}}^\star (t)>0$.
\end{proposition}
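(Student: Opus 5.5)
We argue by contradiction, using a two-step perturbation: a cooling reduction that creates depletion slack, followed by a feed re-timing that turns this slack into a shorter horizon. Suppose an optimal triple $(\mathbf u_{\mathrm{in}}^\star,F_{\mathrm{cw}}^\star,t_f^\star)$ of problem~\eqref{prob:ct-industrial} has a positive-measure set $\mathcal I$ on which $T^\star<T_{\mathrm{ref}}$ and $F_{\mathrm{cw}}^\star>0$. By continuity of $T^\star$ and $T_{\mathrm{ref}}$ and inner regularity of Lebesgue measure, we may shrink $\mathcal I$ to a compact subinterval $[a,b]$ (or a compact set of positive measure) on which $T_{\mathrm{ref}}(t)-T^\star(t)\ge \epsilon>0$ and $F_{\mathrm{cw}}^\star(t)\ge\eta>0$. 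This is exactly the setting of Assumption~\ref{ass:depletion}.

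\textbf{Step 1: cooling reduction.} Replace $F_{\mathrm{cw}}^\star$ on $[a,b]$ by $F_{\mathrm{cw}}^\star-\sigma\phi$, where $\phi\ge0$ is supported in $[a,b]$ and $\sigma>0$ is small, and keep the feed fixed. By Assumption~\ref{ass:regularity}, $q_{\mathrm{ex}}$ is strictly increasing in $F_{\mathrm{cw}}$, so the first-order temperature perturbation is positive after $a$. Gronwall-type $C^1$ dependence on the input bounds this perturbation by $O(\sigma)$. Hence, for $\sigma$ small, the margin $\epsilon$ keeps $T\le T_{\mathrm{ref}}$ and $T\le T^{\mathrm{safe}}_{\max}$ on $[a,b]$. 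On $[b,t_f^\star]$ the state stays in the local admissible tube by Assumption~\ref{ass:depletion}. The dose constraint~\eqref{eq:ct-dose} is untouched. By Assumption~\ref{ass:depletion}, the integrated depletion increases to first order, so
\[
n_R^\sigma(t_f^\star)=n_R^\star(t_f^\star)-c\,\sigma+o(\sigma),\qquad c>0 .
\]
This creates slack $c\sigma$ in the endpoint constraint~\eqref{eq:ct-conv-monomer}.

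\textbf{Step 2: horizon shortening.} Shorten the horizon to $t_f^\star-\tau$ with $\tau=\kappa\sigma$, where $\kappa>0$ is small. Truncating the trajectory removes $\int_{t_f^\star-\tau}^{t_f^\star}\mathbf u_{\mathrm{in}}\,\mathrm dt=O(\tau)$ of feed. We reinsert this deficit on the unsaturated sets $\mathcal J_i$ of Assumption~\ref{ass:dose-slack}, spreading it so that each component increase stays below $\delta_u$. That assumption guarantees the input bounds and the inactive path constraints remain satisfied, and it restores~\eqref{eq:ct-dose} exactly. Both the truncation and the re-timing change $n_R(t_f)$ only by $O(\tau)=O(\kappa\sigma)$, with a constant $L$ independent of $\kappa$, by $C^1$ dependence of solutions on inputs and horizon. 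Choosing $\kappa<c/L$ gives
\[
n_R(t_f^\star-\tau)\le n_R^\star(t_f^\star)-(c-L\kappa)\sigma+o(\sigma)\le n_R^{\min}
\]
for $\sigma$ small. This yields a feasible solution with $t_f^\star-\tau<t_f^\star$, contradicting optimality.

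\textbf{Main obstacle.} The hardest part is making the perturbation bookkeeping rigorous. Three points need care. First, the $O(\tau)$ effects of truncation and re-timing must be dominated by the $\Theta(\sigma)$ depletion gain; this requires uniform first-order constants from the $C^1$ regularity of Assumption~\ref{ass:regularity} on a compact tube around the optimal trajectory. Second, the re-timed feed, being placed possibly before $[a,b]$, must not erase the temperature margin $\epsilon$. Third, the perturbed temperature must not hit $T_{\mathrm{ref}}$ outside $[a,b]$, where the bound may already be active. I would handle the third point by relying on the ``local admissible trajectory tube'' clauses in Assumptions~\ref{ass:depletion}--\ref{ass:dose-slack}. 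If needed, I would choose $\phi$ so that the temperature perturbation is compensated after $b$ by a matching small cooling increase where $F_{\mathrm{cw}}^\star<F_{\mathrm{cw}}^{\max}$, which still preserves a net depletion gain to first order.
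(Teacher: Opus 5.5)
Your proposal is correct and follows essentially the same argument as the paper: contradiction, a small cooling reduction on a compact subinterval with positive margins that creates depletion slack via Assumption~\ref{ass:depletion}, then horizon shortening with the $O(\tau)$ dose deficit spread over the unsaturated sets $\mathcal J_i$ of Assumption~\ref{ass:dose-slack}, with path-constraint feasibility delegated to the local-tube clauses. The only difference is bookkeeping: the paper fixes the cooling reduction first to get a fixed depletion margin $\varepsilon_R$ and then lets $\tau\to0$ using only continuity of the endpoint map, so your coupling $\tau=\kappa\sigma$ and the uniform constant $L$ are harmless but not needed.
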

\noindent\emph{Interpretation.}
If cooling is already removing heat while the temperature constraint is inactive, a local cooling reduction uses the available temperature margin to create a depletion margin; feasible feed re-timing then restores the required dose on a shorter horizon.

\noindent\emph{Proof idea.}
A detailed proof is provided in Supplementary Material, Section S2.1. If useful cooling is applied while the trajectory has a positive temperature margin, a small cooling reduction lifts the trajectory within a local admissible tube that preserves this margin. Assumption~\ref{ass:depletion} gives a positive depletion margin for this variation, and Assumption~\ref{ass:dose-slack} allows the required feed adjustment to be re-timed over unsaturated feed intervals without violating path constraints. The same dose and terminal condition can then be achieved with a shorter batch.

\noindent\emph{Control consequence.}
On actively cooled production arcs covered by Proposition~\ref{prop:temp-boundary}, the optimal policy eliminates systematic temperature margin and operates at the effective reference bound $T_{\mathrm{ref}}$, up to transition portions. Once this primary temperature-tracking objective is assigned to the temperature/quality loop, the remaining question is which feed-side signal should carry the economic adjustment.

On the tracking manifold $T(t)\equiv T_{\mathrm{ref}}(t)$, the required heat-removal rate is
\begin{equation}\label{eq:qreq-tv}
q_{\mathrm{req}}(\mathbf n,\mathbf u,t)
:= G(\mathbf n,T_{\mathrm{ref}}(t))+H(\mathbf u)
   - C_P(\mathbf n,T_{\mathrm{ref}}(t))\,\dot T_{\mathrm{ref}}(t).
\end{equation}
The cooling flow required to provide this heat removal is
\begin{equation}\label{eq:Freq-tv}
F_{\mathrm{cw}}^{\mathrm{th}}(\mathbf n,\mathbf u,t)
= q_{\mathrm{ex}}^{-1}\!\Bigl(T_{\mathrm{ref}}(t),\, q_{\mathrm{req}}(\mathbf n,\mathbf u,t)\Bigr),
\end{equation}
where $q_{\mathrm{ex}}^{-1}(T,\cdot)$ denotes the cooling-flow value that achieves a specified heat-removal rate at temperature $T$. This inverse is well defined on its range by the monotonicity in Assumption~\ref{ass:regularity}.

The maximum removable heat at the temperature cap is $q_{\mathrm{ex}}(T_{\mathrm{ref}}(t),F_{\mathrm{cw}}^{\max})$. Define the thermal-capacity mismatch as required heat removal minus maximum heat removal:
\begin{equation}\label{eq:phi-tv}
\phi(\mathbf n,\mathbf u,t)
:= q_{\mathrm{req}}(\mathbf n,\mathbf u,t)
   - q_{\mathrm{ex}}(T_{\mathrm{ref}}(t),F_{\mathrm{cw}}^{\max})\ \le\ 0.
\end{equation}
Here $\phi<0$ means that heat-removal capacity remains, $\phi=0$ is the active cooling boundary, and $\phi>0$ means that tracking $T_{\mathrm{ref}}(t)$ would require more cooling than the plant can provide.
This margin to the cooling limit gives the pointwise feed-feasibility condition used below in the tracking-manifold optimality analysis. Define
\begin{equation}\label{eq:UT-tv}
\mathcal U_T(\mathbf n,t):=\Big\{\mathbf u\in\mathbb R^{n_u}\ \big|\ \mathbf 0 \le \mathbf u \le \mathbf u^{\max},\ \phi(\mathbf n,\mathbf u,t)\le 0\Big\}.
\end{equation}
Because $H(\mathbf u)$ is affine, $\phi(\mathbf n,\mathbf u,t)$ is affine in $\mathbf u$ for fixed state and time. Hence $\mathcal U_T(\mathbf n, t)$ is the intersection of a box and a half-space: a possibly empty compact convex polytope. It is nonempty whenever $\phi(\mathbf n,\mathbf 0,t)\le 0$.

To write the total feed dosage constraint \eqref{eq:ct-dose} as an endpoint constraint, introduce the cumulative feed amount $\mathbf m(t)\in\mathbb R^{n_u}$:
\begin{equation}\label{eq:dose-dyn}
\dot{\mathbf m}(t) = \mathbf u_{\mathrm{in}}(t), \quad \mathbf m(0) = \mathbf 0,
\end{equation}
with the terminal constraint $\mathbf m(t_f) = \mathbf U^{\max}$. Since $\mathbf u_{\mathrm{in}}(t)\ge \mathbf 0$, this terminal equality also implies $\mathbf m(t)\le \mathbf U^{\max}$ for all $t\in[0,t_f]$.

The reduced minimum-time problem is
\begin{equation}\label{prob:prod-tracking}
\begin{aligned}
\min_{\mathbf u_{\mathrm{in}}(\cdot), t_f}\quad & 
J = t_f \\[1mm]
\text{s.t.}\quad &
\dot{\mathbf n}(t)=\mathbf N\,\mathbf r\!\bigl(\mathbf n(t),T_{\mathrm{ref}}(t)\bigr)
                  +\mathbf W_{\mathrm{in}}\,\mathbf u_{\mathrm{in}}(t),\\
& \dot{\mathbf m}(t) = \mathbf u_{\mathrm{in}}(t),\\
& \mathbf u_{\mathrm{in}}(t)\in\mathcal U_T\bigl(\mathbf n(t),t\bigr),\quad t\in [0,t_f],\\
& \mathbf m(0) = \mathbf 0, \quad \mathbf m(t_f) = \mathbf U^{\max},\\
& n_R(t_f) \le\ n_R^{\min}.
\end{aligned}
\end{equation}

\subsection{Hamiltonian Feed Allocation on the Tracking Manifold}

The Hamiltonian associated with the minimum-time problem \eqref{prob:prod-tracking} separates the reaction-dependent terms from the input-affine part:
\begin{equation}\label{eq:Hamiltonian-tv}
\begin{aligned}
\mathcal{H}(\mathbf n,\mathbf m,\mathbf u,\boldsymbol{\lambda},t)
&= 1 + \boldsymbol{\lambda}_n^\top \mathbf N\,\mathbf r\bigl(\mathbf n,T_{\mathrm{ref}}(t)\bigr)\\
&\quad + \bigl(\mathbf W_{\mathrm{in}}^\top \boldsymbol{\lambda}_n
 + \boldsymbol{\lambda}_m\bigr)^\top \mathbf u\\
&=: \mathcal H_0(\mathbf n,\boldsymbol{\lambda},t)
 + \boldsymbol{\sigma}(\mathbf n,\boldsymbol{\lambda},t)^\top\mathbf u,
\end{aligned}
\end{equation}
where $\boldsymbol{\lambda} \in \mathbb R^{n_s+n_u}$ is the costate vector, and $\boldsymbol{\sigma}(\cdot)$ is the \emph{switching function} vector.
By Pontryagin's minimum principle, the optimal control $\mathbf u^\star(t)$ minimizes this Hamiltonian pointwise:
\begin{equation}\label{eq:ham-min-tv}
\mathbf u^\star(t) = \arg\min_{\mathbf u \in \mathcal U_T(\mathbf n^\star(t), t)} \boldsymbol{\sigma}(t)^\top \mathbf u.
\end{equation}
Only the pointwise minimization property is used here. The costate dynamics and transversality conditions enter through the switching-function vector and hence affect switching times, but they need not be solved explicitly to characterize the instantaneous active-constraint structure.

\begin{proposition}[Active-constraint structure of the time-optimal feed]\label{prop:boundary-tv}
Under Assumption~\ref{ass:regularity}, the pointwise Hamiltonian minimization \eqref{eq:ham-min-tv} is a linear program over the convex polytope $\mathcal U_T(\mathbf n, t)$. The optimal feed rate $\mathbf u^\star(t)$ is determined by the active constraints:
\begin{itemize}
    \item \textbf{Cooling-margin regime.} If cooling margin remains,
    \[
    \phi(\mathbf n^\star(t),\mathbf u^\star(t),t)<0,
    \]
    the feed rate is determined solely by the switching-function signs:
    \[
    u_i^\star(t) = \begin{cases}
      0 & \text{if } \sigma_i(t) > 0, \\
      u_i^{\max} & \text{if } \sigma_i(t) < 0, \\
      \text{singular} & \text{if } \sigma_i(t) = 0.
    \end{cases}
    \]
    \item \textbf{Cooling-limited regime.} If the cooling margin is exhausted on the tracking manifold,
      \[ \phi(\mathbf n^\star(t), \mathbf u^\star(t), t) = 0, \]
    then the feed must satisfy the active thermal-capacity equality. Equivalently, $F_{\mathrm{cw}}^{\mathrm{th}} = F_{\mathrm{cw}}^{\max}$, so the required cooling reaches the maximum available cooling capacity.
\end{itemize}
\end{proposition}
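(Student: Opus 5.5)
The plan is to treat \eqref{eq:ham-min-tv} as a parametric linear program in $\mathbf u$ for fixed $(\mathbf n^\star(t),\boldsymbol\lambda(t),t)$, and read off the regimes from its KKT conditions.

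\textbf{Step 1: it is an LP.} By Assumption~\ref{ass:regularity}, $H$ is affine, so $H(\mathbf u)=h_0+\mathbf h^\top\mathbf u$ for some $h_0\in\mathbb R$ and $\mathbf h\in\mathbb R^{n_u}$. Substituting into \eqref{eq:qreq-tv} and \eqref{eq:phi-tv} gives
\[
\phi(\mathbf n,\mathbf u,t)=\phi_0(\mathbf n,t)+\mathbf h^\top\mathbf u ,
\]
so $\mathcal U_T$ is a box intersected with one half-space. The objective $\boldsymbol\sigma^\top\mathbf u$ is linear, since $\mathcal H_0$ does not depend on $\mathbf u$. The set $\mathcal U_T$ is compact and convex. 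Along an optimal trajectory it is nonempty, because $\mathbf u^\star(t)\in\mathcal U_T$. A minimizer therefore exists and the LP structure holds.

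\textbf{Step 2: KKT conditions.} Because all constraints are affine, KKT holds without a further constraint qualification. There exist multipliers $\mu_i^-,\mu_i^+\ge0$ for the bounds and $\nu\ge0$ for $\phi\le0$ such that
\[
\sigma_i-\mu_i^-+\mu_i^+ +\nu h_i=0 ,
\]
with complementary slackness holding for each multiplier.

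\textbf{Step 3: cooling-margin regime.} If $\phi(\mathbf n^\star,\mathbf u^\star,t)<0$, complementary slackness forces $\nu=0$, so $\sigma_i=\mu_i^--\mu_i^+$.
\begin{itemize}
\item If $\sigma_i>0$, then $\mu_i^->0$, so the lower bound is active and $u_i^\star=0$.
\item If $\sigma_i<0$, then $\mu_i^+>0$, so $u_i^\star=u_i^{\max}$.
\item If $\sigma_i=0$, the Hamiltonian gives no pointwise information on $u_i$, which is the singular case.
\end{itemize}
The same conclusion can be reached directly: since $\phi<0$ at $\mathbf u^\star$, the half-space is locally inactive. Any coordinatewise move of $u_i$ toward the sign-dictated bound that stays inside the box would strictly lower $\boldsymbol\sigma^\top\mathbf u$, which contradicts optimality.

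\textbf{Step 4: cooling-limited regime.} If $\phi=0$, the feed lies on the face $\phi_0+\mathbf h^\top\mathbf u=0$ of the polytope. The LP optimum is then a vertex, or an edge or face in degenerate cases, of that face, determined by $\boldsymbol\sigma$ and $\nu h_i$ together. For the equivalence with $F_{\mathrm{cw}}^{\mathrm{th}}=F_{\mathrm{cw}}^{\max}$, $\phi=0$ means
\[
q_{\mathrm{req}}=q_{\mathrm{ex}}(T_{\mathrm{ref}},F_{\mathrm{cw}}^{\max}).
\]
Applying the inverse in \eqref{eq:Freq-tv} gives $F_{\mathrm{cw}}^{\mathrm{th}}=F_{\mathrm{cw}}^{\max}$. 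The inverse is unique because $q_{\mathrm{ex}}(T,\cdot)$ is strictly increasing by Assumption~\ref{ass:regularity}. The converse follows by applying $q_{\mathrm{ex}}(T_{\mathrm{ref}},\cdot)$ to both sides.

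\textbf{Main obstacle.} The delicate point is Step 3's singular and degenerate cases. If $\sigma_i\equiv0$ on an interval, the pointwise LP does not fix $u_i$. The statement handles this by labelling the case singular rather than resolving it, so I will only note that the feed there must be determined from higher-order conditions. I will also note that the case split in Steps 3 and 4 is exhaustive, because $\phi\le0$ holds on $\mathcal U_T$ by definition.
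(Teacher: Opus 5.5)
Your proposal is correct and follows the paper's proof essentially step for step. You identify \eqref{eq:ham-min-tv} as an LP over the input box intersected with one affine half-space, and your complementary-slackness argument ($\nu=0$) formalizes the paper's observation that the feasible set is locally only the box, which you also state directly. You then obtain $F_{\mathrm{cw}}^{\mathrm{th}}=F_{\mathrm{cw}}^{\max}$ from $\phi=0$ through the inverse in \eqref{eq:Freq-tv}, where your explicit use of strict monotonicity and the converse direction tightens what the paper leaves implicit.
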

\noindent\emph{Interpretation.}
When cooling capacity is not limiting, $\phi<0$, the optimal feed is determined by the switching-function signs over the input box. When cooling capacity becomes limiting, $\phi=0$, the feed must additionally satisfy the active thermal-capacity equality \eqref{eq:phi-tv}, so the required heat removal equals the maximum available cooling duty.

\noindent\emph{Proof idea.}
A detailed proof is provided in Supplementary Material, Section S2.2. The Hamiltonian minimization is pointwise and linear in the feed inputs. In the cooling-margin regime, the input box determines feed switching. On the cooling-capacity boundary, $\phi=0$ is exactly the equality between required and maximum removable heat, or $F_{\mathrm{cw}}^{\mathrm{th}}=F_{\mathrm{cw}}^{\max}$ on the tracking manifold. The statement is an a.e. reduced-manifold result; other active state constraints, endpoint conditions, or singular arcs can override this local feed structure.

\noindent\emph{Architecture consequence.}
Together, the two optimality properties identify the economic self-optimizing signal on the tracking manifold. The temperature loop supplies the required-cooling signal, and the economic loop should regulate its margin to available cooling capacity. Feed-related inputs are the relevant economic manipulated variables because they change future heat release and therefore the required heat removal seen by this cooling-capacity margin.

\begin{remark}[Typical three-stage operation]\label{rem:three-stage}
Proposition~\ref{prop:boundary-tv} is an instantaneous feed-allocation statement, not a complete batch schedule. In a typical semi-batch polymerization, the physical sequence is as follows. Early in the reaction, heat generation is modest and cooling margin remains ($\phi<0$), so the switching-function signs and feed bounds determine the feed rate. During the main thermally limited period, heat generation reaches the cooling limit and the feed must be throttled so that $\phi=0$, equivalently $F_{\mathrm{cw}}^{\mathrm{th}}=F_{\mathrm{cw}}^{\max}$. Near the endpoint, dose completion, residual-reactant depletion, or finishing constraints dominate, and the feed is terminated or moved to its finishing policy.
\end{remark}

\section{ARC Architecture from Cooling-Demand Regulation}
\label{sec:arc-structure}
The optimality analysis assigns two controller roles. The primary loop regulates the temperature/quality constraint through utility actuation. The economic loop uses feed-related inputs to move the cooling-demand signal toward available cooling capacity. Figure~\ref{fig:control-structure} implements these roles as a two-loop ARC architecture; its interconnection signal is the virtual cooling demand produced by the primary loop.

\noindent\emph{Architecture-level pairing logic.}
The pairing used here is an architecture-level rule for cooling-limited semi-batch operation. The temperature/quality CV is paired with the utility actuator because utility flow directly changes heat removal and is the closest actuator to the path constraint. The cooling-demand CV is paired with feed-related inputs because these inputs change future heat release and therefore move the cooling-capacity coordinate seen by the temperature loop.

\begin{figure}
\centering
\includegraphics[width=\columnwidth]{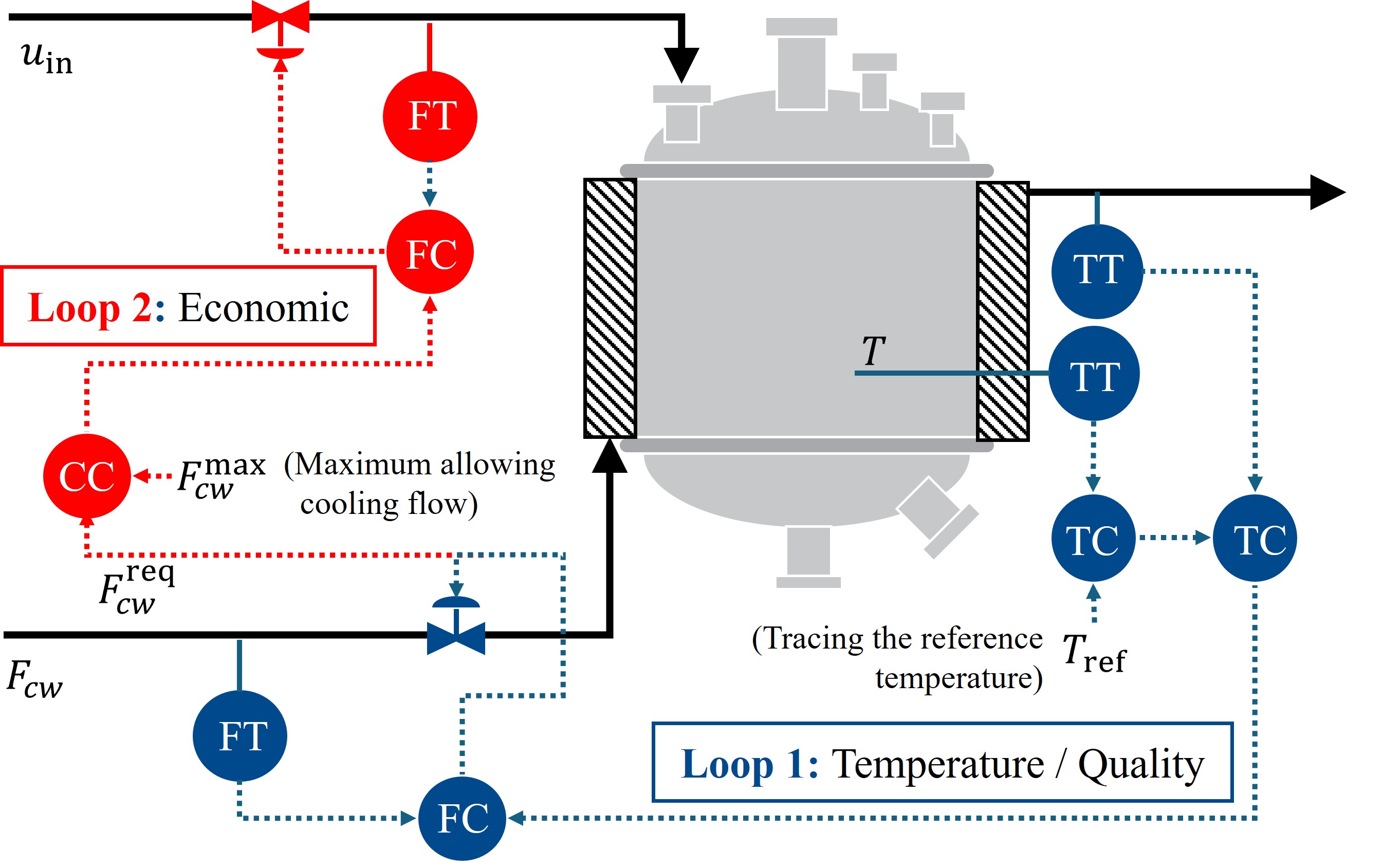}
\caption{Proposed two-loop ARC architecture.}
\label{fig:control-structure}
\end{figure}

\subsection{Virtual cooling demand and actuator saturation}
\label{subsec:virtual-cooling}

The economic loop uses the primary-loop cooling request as its regulatory signal. Define the \emph{virtual cooling demand} $F_{\mathrm{cw}}^{\mathrm{v}}(t)$ as the unsaturated cooling-water request from the temperature/quality controller:
\begin{equation}
  F_{\mathrm{cw}}^{\mathrm{v}}(t) = C_T\big(T_{\mathrm{ref}}(t) - T(t)\big),
  \label{eq:virtual-cooling-demand}
\end{equation}
where $C_T(\cdot)$ includes the sign convention for cooling actuation, so larger $F_{\mathrm{cw}}^{\mathrm{v}}$ means larger requested cooling. In a plant cascade, $C_T$ includes the reactor-temperature master, the jacket or coolant slave, gain scheduling, and anti-windup. The analysis uses only this unsaturated primary-loop output.

The physical cooling-water flow $F_{\mathrm{cw}}(t)$ is obtained from $F_{\mathrm{cw}}^{\mathrm{v}}(t)$ through an explicit saturation nonlinearity:
\begin{equation}
\begin{aligned}
F_{\mathrm{cw}}(t)
&= \operatorname{sat}\!\big(F_{\mathrm{cw}}^{\mathrm{v}}(t);\,0,\,F_{\mathrm{cw}}^{\max}\big)\\
&:= \max\bigl(0,\,\min\bigl(F_{\mathrm{cw}}^{\mathrm{v}}(t),F_{\mathrm{cw}}^{\max}\bigr)\bigr)
\end{aligned}
\label{eq:sat-block}
\end{equation}
which enforces $0 \le F_{\mathrm{cw}}(t) \le F_{\mathrm{cw}}^{\max}$. When $0 < F_{\mathrm{cw}}^{\mathrm{v}} < F_{\mathrm{cw}}^{\max}$, the temperature controller behaves as if it acted directly on the actuator. When $F_{\mathrm{cw}}^{\mathrm{v}}$ exceeds the capacity limit, the physical flow is clipped at $F_{\mathrm{cw}}^{\max}$, while $F_{\mathrm{cw}}^{\mathrm{v}}$ still records the unmet cooling demand.

This virtual/physical separation gives the economic loop a single coordinate for unused cooling capacity and overload. Loop 2 regulates the virtual cooling demand toward a setpoint $F_{\mathrm{cw,sp}}\approx F_{\mathrm{cw}}^{\max}$ by manipulating feed-related inputs:
\begin{equation}
   \mathbf u_{\mathrm{in}}(t) = C_{\mathrm{ceiling}}\big(F_{\mathrm{cw,sp}} - F_{\mathrm{cw}}^{\mathrm{v}}(t)\big),
   \label{eq:ceiling-controller}
\end{equation}
where $C_{\mathrm{ceiling}}(\cdot)$ is a PI element with bounded integral action. If $F_{\mathrm{cw}}^{\mathrm{v}}<F_{\mathrm{cw,sp}}$, the economic loop increases heat release through feed-related actuation; if $F_{\mathrm{cw}}^{\mathrm{v}}\ge F_{\mathrm{cw,sp}}$, it withdraws that actuation. The architecture requirement is input--output coupling: feed action must change future heat release in the same cooling-demand channel. Section~\ref{sec:safety-analysis} gives the corresponding PI, saturation, and integral-bound requirements.

\emph{VPC interpretation.}
In a generic extra-input VPC structure, the primary controller regulates an output $y$ with input $u_1$, while the VPC element regulates a valve-position variable, or a proxy for it, to a target $u_{1s}$ by adjusting an auxiliary input $u_2$ \cite{Skogestad2023ARC}. In Figure~\ref{fig:control-structure}, this mapping is
$y\equiv T$, $y_s\equiv T_{\mathrm{ref}}$, $u_1\equiv F_{\mathrm{cw}}^{\mathrm{v}}$, $u_{1s}\equiv F_{\mathrm{cw,sp}}\approx F_{\mathrm{cw}}^{\max}$, and $u_2\equiv\mathbf u_{\mathrm{in}}$.
The topology is therefore VPC-like, but the VPC variable is not a heuristic physical-valve target. Instead, it is the virtual cooling demand required by the temperature/quality loop to maintain tracking. This signal is economically relevant because it measures how close productive operation is to the cooling-capacity boundary. Choosing $F_{\mathrm{cw,sp}}$ near $F_{\mathrm{cw}}^{\max}$ makes the VPC error a margin-to-limit signal; zero margin corresponds to operation at the selected cooling-capacity boundary. The utility input remains assigned to temperature/quality tracking, while the auxiliary feed input changes future heat release and carries the economic correction.

The architecture-selection step therefore fixes the controlled signals, their pairing, and the VPC topology.

\section{Safety-Oriented Controller Synthesis and Tuning}
\label{sec:safety-analysis}

Having fixed the economic VPC topology, we derive structural and parametric conditions for near-boundary operation. Too little feed underutilizes cooling capacity; excess feed, including integral action that keeps feed high during cooling-limited operation, can accumulate reactant and create future heat release beyond available cooling capacity.

We use reduced closed-loop coordinates: the VPC error measures distance from the cooling-demand target, the projected PI law bounds feed and its integral state, and a local window model gives local safety and feasibility conditions.

\subsection{Reduced coordinate and operating band}
\label{subsec:safety-signals}

The reduced safety--economy coordinate is the VPC tracking error
\begin{equation}\label{eq:ev-def}
e_v(t) \;:=\; F_{\mathrm{cw,sp}} \;-\; F_{\mathrm{cw}}^{\mathrm{v}}(t),
\end{equation}
where $e_v>0$ means that virtual cooling demand is below its economic target and unused cooling capacity remains, while $e_v<0$ means that demand exceeds the target. The plant-side residual is the signed margin to the cooling limit, $\phi(\mathbf n,\mathbf u,t)$ from~\eqref{eq:phi-tv}: on the tracking manifold, $\phi<0$ means available cooling margin, $\phi=0$ is the active cooling boundary, and $\phi>0$ means that the demanded tracking would require more cooling than the plant can provide.

The reduced analysis is local to the near-boundary thermal-capacity regime
\begin{equation}\label{eq:regime-def}
\mathcal R \;:=\; \bigl\{\, t \in [0, t_f] \;:\; |\phi(\mathbf n, \mathbf u, t)| \;\le\; \delta \,\bigr\},
\end{equation}
where small feed adjustments can affect the balance between heat generation and cooling capacity. The goal is to keep the VPC error $e_v$ within the operating band
\begin{equation}\label{eq:working-band}
\mathcal E \;:=\; [-\eta_s,\eta_e],
\qquad
\eta_s>0,\ \eta_e>0,
\end{equation}
where $-\eta_s$ limits the sustained cooling-saturation (overload) side and $\eta_e$ limits the conservative cooling-margin side. The band is chosen to avoid both sustained cooling saturation and underutilization of cooling capacity.

\subsection{Projected economic loop and feed-window budgets}
\label{subsec:structural-props}

For the reduced analysis, feed actuation uses one fixed-composition direction,
\begin{equation}\label{eq:uin-1d}
\mathbf u_{\mathrm{in}}(t)=\alpha(t)\,\mathbf d,
\qquad \alpha(t)\in[0,\alpha^{\max}],
\end{equation}
where $\alpha$ changes feed intensity but not composition. The economic VPC loop is the projected PI law
\begin{equation}\label{eq:vpc-law-1d}
\alpha(t)=\operatorname{sat}\!\bigl(\alpha_b+K_P\,e_v(t)+r(t);\,0,\,\alpha^{\max}\bigr),
\qquad K_P\ge 0,
\end{equation}
where $\alpha_b$ is the bias term and $r(t)$ is the integral term, evolving as
\begin{equation}\label{eq:proj-int}
\dot r(t)=
\begin{cases}
K_I\,e_v(t), & -r_-<r(t)<r_+,\\[1mm]
\min\{K_I\,e_v(t),\,0\}, & r(t)=r_+,\\[1mm]
\max\{K_I\,e_v(t),\,0\}, & r(t)=-r_-.
\end{cases}
\end{equation}
Here $K_I>0$ and $r(t_0)\in[-r_-,r_+]$. The projection keeps $r(t)$ within fixed bounds and thereby limits integral-generated feed adjustments. For $K_P>0$, rearranging~\eqref{eq:vpc-law-1d} shows that $e_v\le-(r_++\alpha_b)/K_P$ makes the pre-saturation command nonpositive for every admissible $r(t)\le r_+$. Hence $\alpha=0$: sufficiently large overload forces feed shutoff through the controller structure itself, not through a plant model. This model-independent shutoff is the safety-relevant property.

To connect the projected PI law with the local safety analysis, we characterize the feed actions that the controller can generate over a finite window of length $\tau$. At the overload boundary $e_v=-\eta_s$, the largest feed profile compatible with~\eqref{eq:vpc-law-1d}--\eqref{eq:proj-int} is bounded by
\begin{equation}\label{eq:alpha-s-max}
\begin{aligned}
\alpha_s^{\max}(s)
&:= \operatorname{sat}\!\Bigl(
\alpha_b-K_P\eta_s+\max\{-r_-,r_+-K_I\eta_s s\};\\
&\hspace{22mm}0,\,\alpha^{\max}\Bigr),
\quad s\in[0,\tau].
\end{aligned}
\end{equation}
This profile is the worst feed load that can persist while cooling demand is already on the overload side. At the conservative boundary $e_v=\eta_e$, the guaranteed corrective feed profile is bounded below by
\begin{equation}\label{eq:alpha-e-min}
\begin{aligned}
\alpha_e^{\min}(s)
&:= \operatorname{sat}\!\Bigl(
\alpha_b+K_P\eta_e+\min\{r_+,-r_-+K_I\eta_e s\};\\
&\hspace{22mm}0,\,\alpha^{\max}\Bigr),
\quad s\in[0,\tau].
\end{aligned}
\end{equation}
Endpoint motion depends on both feed magnitude and timing: feed applied early has more time to build reactive inventory and release heat. We therefore use the weighted cumulative feed budgets
\begin{equation}\label{eq:Jw-boundary}
\begin{aligned}
J_{w,s}^{\max}
&:= \int_0^\tau (\tau-s)\alpha_s^{\max}(s)\,ds,\\
J_{w,e}^{\min}
&:= \int_0^\tau (\tau-s)\alpha_e^{\min}(s)\,ds.
\end{aligned}
\end{equation}
Here $J_{w,s}^{\max}$ is the worst overload-side weighted feed load, and $J_{w,e}^{\min}$ is the guaranteed conservative-side recovery budget.

\subsection{Local window model and endpoint screen}
\label{subsec:thermal-bridge}

In the proposed ARC architecture, the temperature loop is the primary safety loop. Utility actuation directly changes the temperature/cooling-demand coordinate, whereas feed-side economic action changes required cooling mainly through reactive inventory and future heat release. Thus the dominant feed-side effect on $e_v$ is dynamic accumulation rather than an instantaneous static gain.

\begin{assumption}[Local delayed feed effect]\label{ass:curvature}
In the near-boundary regime $\mathcal R$ defined by~\eqref{eq:regime-def}, the VPC coordinate admits the local second-order approximation
\begin{equation}\label{eq:curvature-model}
\ddot e_v(t) \;=\; d_0(t) \;-\; q(t)\,\alpha(t),
\qquad t\in\mathcal R,
\end{equation}
where
\begin{equation}\label{eq:curvature-bounds}
d_- \le d_0(t)\le d_+,
\qquad
q_- \le q(t)\le q_+,
\qquad
q_->0.
\end{equation}
\end{assumption}
The drift $d_0$ represents recipe motion, reaction evolution, and inner-loop dynamics, while $q$ measures the sensitivity of cooling demand to feed action. Physically, feed changes reactive inventory and concentrations, and then appears to the temperature loop as changed required cooling. This local form corresponds to an augmented closed-loop model in which the feed-side economic direction has local relative degree two with respect to the cooling-demand coordinate; otherwise, the coefficients are interpreted as local closed-loop sensitivity envelopes estimated from reduced simulations or closed-loop data. Smooth process maps and bounded inner-loop motion over $\mathcal R$ justify finite coefficient bounds, and $q_->0$ states that the selected feed direction increases future heat load.

For a fixed window $[t_k,t_{k+1}]$ with $t_{k+1}=t_k+\tau$, define
\[
e_k:=e_v(t_k),
\qquad
\dot e_k:=\dot e_v(t_k).
\]
Integrating~\eqref{eq:curvature-model} twice yields an expression for the endpoint value $e_{k+1}$ as a function of the feed applied over the window:
\begin{equation}\label{eq:window-law}
\begin{aligned}
e_{k+1}-e_k
&= \tau \dot e_k
 + \int_0^\tau (\tau-s)\,d_0(t_k+s)\,ds\\
&\quad - \int_0^\tau (\tau-s)\,q(t_k+s)\alpha(t_k+s)\,ds.
\end{aligned}
\end{equation}
The kernel $(\tau-s)$ weights early feed more than late feed because early feed has more time to affect reactive inventory and heat release. The relevant reduced quantity is therefore a weighted cumulative feed budget, not an instantaneous feed value. Define
\begin{equation}\label{eq:Jw-def}
J_{w,k}
:=
\int_0^\tau (\tau-s)\,\alpha(t_k+s)\,ds.
\end{equation}
Combining~\eqref{eq:window-law}, \eqref{eq:Jw-def}, and the coefficient bounds in~\eqref{eq:curvature-bounds} gives
\begin{equation}\label{eq:window-lower}
e_{k+1}
\ge
e_k+\tau \dot e_k+\frac{d_-}{2}\tau^2-q_+J_{w,k},
\end{equation}
\begin{equation}\label{eq:window-upper}
e_{k+1}
\le
e_k+\tau \dot e_k+\frac{d_+}{2}\tau^2-q_-J_{w,k}.
\end{equation}
The endpoint screen below checks this weighted feed budget at the two edges of the operating band.

\begin{theorem}[Endpoint screen for the operating band]\label{thm:band-invariance}
Consider the working band $\mathcal E=[-\eta_s,\eta_e]$ from~\eqref{eq:working-band}. Suppose Assumption~\ref{ass:curvature} holds on a window of length $\tau$. Let $\nu_s^-$ denote a lower bound on the window-start slope $\dot e_k$ at the overload edge $e_k=-\eta_s$, and let $\nu_e^+$ denote an upper bound on $\dot e_k$ at the conservative edge $e_k=\eta_e$. Assume these boundary-slope bounds hold:
\begin{equation}\label{eq:boundary-slopes}
e_k=-\eta_s \Longrightarrow \dot e_k \ge \nu_s^-,
\qquad
e_k=\eta_e \Longrightarrow \dot e_k \le \nu_e^+,
\end{equation}
and suppose the controller-generated weighted feed satisfies
\begin{equation}\label{eq:boundary-jw-bounds}
e_k=-\eta_s \Longrightarrow J_{w,k}\le J_{w,s}^{\max},
\qquad
e_k=\eta_e \Longrightarrow J_{w,k}\ge J_{w,e}^{\min}.
\end{equation}
If
\begin{equation}\label{eq:safety-window-cond}
\tau \nu_s^- + \frac{d_-}{2}\tau^2 - q_+ J_{w,s}^{\max}\ge 0,
\end{equation}
and
\begin{equation}\label{eq:economic-window-cond}
\tau \nu_e^+ + \frac{d_+}{2}\tau^2 - q_- J_{w,e}^{\min}\le 0,
\end{equation}
then outward endpoint motion is blocked at the band boundary where the window starts:
\[
e_k=-\eta_s \Longrightarrow e_{k+1}\ge -\eta_s,
\qquad
e_k=\eta_e \Longrightarrow e_{k+1}\le \eta_e .
\]
\end{theorem}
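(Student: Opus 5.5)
The proof is a direct chaining of the window inequalities \eqref{eq:window-lower}--\eqref{eq:window-upper} with the boundary hypotheses, handled separately at each edge of the band. First I will justify the two window inequalities from Assumption~\ref{ass:curvature}. Integrating \eqref{eq:curvature-model} twice over $[t_k,t_k+\tau]$ and applying Cauchy's repeated-integration formula gives \eqref{eq:window-law}. The kernel $(\tau-s)$ is nonnegative and $\int_0^\tau(\tau-s)\,ds=\tau^2/2$. Hence $d_-\le d_0\le d_+$ bounds the drift integral between $\tfrac{d_-}{2}\tau^2$ and $\tfrac{d_+}{2}\tau^2$. Because $\alpha\ge0$ from the saturation in \eqref{eq:vpc-law-1d}, the product $(\tau-s)\alpha$ is also nonnegative, so $q_-J_{w,k}\le\int_0^\tau(\tau-s)q\alpha\,ds\le q_+J_{w,k}$.

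For the overload edge, take $e_k=-\eta_s$ and start from \eqref{eq:window-lower}. The slope bound \eqref{eq:boundary-slopes} gives $\tau\dot e_k\ge\tau\nu_s^-$, using $\tau>0$. The feed bound \eqref{eq:boundary-jw-bounds} together with $q_+>0$ gives $-q_+J_{w,k}\ge-q_+J_{w,s}^{\max}$. Combining these, $e_{k+1}\ge-\eta_s+\bigl(\tau\nu_s^-+\tfrac{d_-}{2}\tau^2-q_+J_{w,s}^{\max}\bigr)$, and condition \eqref{eq:safety-window-cond} makes the bracket nonnegative, so $e_{k+1}\ge-\eta_s$.

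The conservative edge is symmetric. With $e_k=\eta_e$, use \eqref{eq:window-upper}, $\tau\dot e_k\le\tau\nu_e^+$, and $-q_-J_{w,k}\le-q_-J_{w,e}^{\min}$ (valid since $q_->0$). Condition \eqref{eq:economic-window-cond} then gives $e_{k+1}\le\eta_e$.

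The main obstacle is not the algebra but the hypothesis that Assumption~\ref{ass:curvature} is valid on the whole window. The model \eqref{eq:curvature-model} is only asserted for $t\in\mathcal R$, so the argument needs $[t_k,t_k+\tau]\subset\mathcal R$. I would state this explicitly as the reading of ``holds on a window of length $\tau$''. I would also remark that \eqref{eq:boundary-jw-bounds} is taken as a hypothesis rather than derived. A derivation would require $e_v$ to stay at or beyond the edge throughout the window for \eqref{eq:alpha-s-max}--\eqref{eq:alpha-e-min} to bound $\alpha$, and the theorem deliberately avoids that. The conclusion is therefore only an endpoint statement and does not constrain $e_v$ inside the window.
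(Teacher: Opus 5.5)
Your proof is correct and follows the paper's argument: you substitute each band edge into the window bounds \eqref{eq:window-lower}--\eqref{eq:window-upper}, apply the slope and weighted-feed hypotheses (using $\tau>0$ and $q_\pm>0$), and conclude from \eqref{eq:safety-window-cond}--\eqref{eq:economic-window-cond}. Your re-derivation of the window bounds from Assumption~\ref{ass:curvature} (nonnegative kernel, $\alpha\ge 0$) and your remark that the window must lie in $\mathcal R$ only spell out what the paper establishes just before the theorem and takes as given in ``holds on a window of length $\tau$''.
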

\noindent\emph{Interpretation.}
The theorem checks whether the feed actions allowed by the control structure are sufficient to block outward endpoint motion at the overload and conservative band boundaries over one analysis window. The two inequalities test the lower-bound endpoint condition at the overload edge and the upper-bound endpoint condition at the conservative edge.

\noindent\emph{Proof idea.}
A detailed proof is provided in Supplementary Material, Section S2.3. The theorem substitutes the two boundary cases into~\eqref{eq:window-lower}--\eqref{eq:window-upper}; conditions~\eqref{eq:safety-window-cond} and~\eqref{eq:economic-window-cond} are the resulting one-step endpoint inequalities at the overload and conservative edges.

\noindent\emph{Tuning consequence.}
The two inequalities in Theorem~\ref{thm:band-invariance} set different tuning requirements at the two band edges. At the cooling-overload edge, proportional withdrawal and integral clamping must limit the weighted feed budget so that accumulated reactant cannot drive future heat release beyond available cooling capacity. At the conservative edge, the controller must retain enough recovery authority to increase feed and move operation back toward the economic cooling-demand target. The resulting tuning problem is a balance between sustained-overload avoidance and recovery performance; the kernel $(\tau-s)$ makes early feed more critical because it has more time to generate heat within the window.

The theorem provides a finite-window endpoint screen. Continuous-time excursions inside a window and the mapping from $e_v$ to the plant-side residual $\phi$ are implementation checks discussed in Supplementary Material, Section S3.

\subsection{Controller requirements and tuning workflow}
\label{subsec:design-implications}

The theorem gives a tuning workflow for the controller design requirements: choose a working band $\mathcal E$, estimate $(d_\pm,q_\pm)$ and the slope bounds $(\nu_s^-,\nu_e^+)$, choose $\tau$, and check~\eqref{eq:safety-window-cond}--\eqref{eq:economic-window-cond}. The slope bounds can be estimated from process knowledge, closed-loop simulations, or conservative closed-loop envelopes. The same inequalities guide the VPC target $F_{\mathrm{cw,sp}}$, integral limits $r_\pm$, and gains $(K_P,K_I)$; closed-loop simulations then check behavior outside the local endpoint screen.

For ARC realization, $F_{\mathrm{cw,sp}}$ sets the safety--economy position, $r_\pm$ limit accumulated integral feed addition, and $(K_P,K_I)$ govern cooling-overload withdrawal and recovery speed. With several feed-side channels, this logic supports an ordered realization: fast limited channels provide small economic feed adjustments, while slower high-authority channels provide relief under sustained thermal loading.

\section{Reduced Verification Benchmark}
\label{sec:reduced-benchmark-sec}
\label{subsec:reduced-benchmark}

We test Theorem~\ref{thm:band-invariance} on the reduced isothermal semi-batch benchmark of Srinivasan et al.~\cite{srinivasanDynamicOptimizationBatch2003a,srinivasanDynamicOptimizationBatch2003}, where the assumptions are satisfied and the screen quantities can be computed from the reduced model. Reactant A is initially charged, reactant B is fed with flow rate $u(t)$, and the reactor temperature is held constant:
\begin{equation}
\label{prob:isothermal-benchmark}
\begin{aligned}
\max_{u(\cdot)} \quad & x_a(t_f) \\
\text{s.t.}\quad
& \dot x_a(t)=\frac{R_{\mathrm{iso}}(x_a(t),V(t);T)\,V(t)}{N_A^0}, \\
& \dot V(t)=u(t), \\
& 0\le u(t)\le u_{\max}, \qquad V(t)\le V_{\max}, \\
& T_{\mathrm{cf}}(x_a(t),V(t))\le T_{\max}, \\
& x_a(0)=0,\qquad V(0)=V_A^0.
\end{aligned}
\end{equation}
where $x_a$ is the conversion of A, $V$ is the reactor volume, $R_{\mathrm{iso}}$ is the isothermal reaction rate, $N_A^0$ is the initial moles of A, and $T_{\mathrm{cf}}$ is the cooling-failure temperature. The OCP reference is a CasADi/IPOPT direct transcription~\cite{andersson2019,biegler2009} with $N=100$ uniform intervals and fourth-order Runge--Kutta dynamics.

For closed-loop verification, we use the single-channel specialization of Section~\ref{subsec:structural-props}. The physical feed is the economic input, so $u(t)=\alpha(t)$ and
\[
\begin{aligned}
u(t)&=\operatorname{sat}\!\bigl(\alpha_b+K_P e_v(t)+\zeta(t);\,0,\,u_{\max}\bigr),\\
\zeta(t)&\in[-\zeta_-,\zeta_+],
\end{aligned}
\]
with $\zeta(t)$ updated by~\eqref{eq:proj-int}, using $(\zeta,\zeta_\pm)$ in place of $(r,r_\pm)$ to distinguish the benchmark integral state from $R_{\mathrm{iso}}$. The benchmark has no separate temperature cascade; the cooling-failure margin supplies the VPC coordinate
\[
e_v(t):=T_{\max}-T_{\mathrm{cf}}(x_a(t),V(t))-e_{\mathrm{sp}}.
\]
Here $e_{\mathrm{sp}}>0$ is the selected cooling-failure margin setpoint. Thus $e_v=0$ means that the remaining margin equals $e_{\mathrm{sp}}$; $e_v<0$ is closer to the cooling-failure boundary, and $e_v>0$ is the conservative cooling-margin side.

\begin{figure}[!t]
  \centering
  \includegraphics{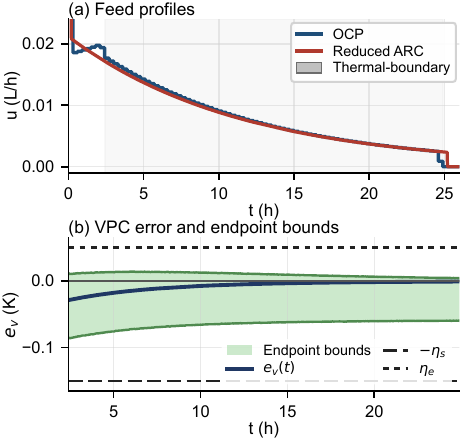}
  \caption{Reduced isothermal benchmark verification.}
  \label{fig:benchmark_verification}
\end{figure}

Panel (a) shows feed recovery on the OCP thermal-boundary interval: the maximal connected segment with $0<u^\star<u_{\max}$ and $T_{\max}-T_{\mathrm{cf}}^\star\le0.5$ K, giving $t\in[2.4,24.9]$ h. Panel (b) checks $e_v$ and endpoint bounds on the same interval. Since $T_{\max}-T_{\mathrm{cf}}$ is the remaining cooling-failure margin, $e_v>-e_{\mathrm{sp}}$ is equivalent to $T_{\mathrm{cf}}<T_{\max}$.

The reduced tuning follows Section~\ref{subsec:design-implications}: set $e_{\mathrm{sp}}=1$ K to keep a one-degree cooling-failure margin, choose the working band $(\eta_s,\eta_e)=(0.15,0.05)$ K, estimate the local coefficients on the thermal-boundary interval, and select $(K_P,K_I,\zeta_\pm)$ so that the endpoint inequalities hold. The resulting ARC law closely reproduces the OCP feed profile, with a terminal conversion gap of $0.12\%$ ($0.9219$ for the OCP versus $0.9207$ for ARC). More importantly for Theorem~\ref{thm:band-invariance}, the closed-loop trajectory satisfies $e_v\in[-0.029,-0.001]\subset[-0.15,0.05]$ K on the same interval, meaning that the VPC error remains inside the prescribed operating band. Supplementary Material, Section S4 reports the coefficient estimates and endpoint-test values.

\section{Industrial Semi-Batch Polymerization Case Study}
\label{sec:industrial-case}

\subsection{Industrial Case Description}
\label{subsec:industrial-case-description}
The reactor system, detailed in Supplementary Material, Section S5, follows the industrial gas--liquid polymerization benchmark used in prior process-safety and NMPC studies \cite{li2025benchmark,zhou2026RealtimeNMPC}. It is a pressurized aqueous free-radical polymerization of gaseous monomer A initiated by species B. Figure~\ref{fig:mode-transition} summarizes the recipe logic: heating and pressurization to $T_{\mathrm{sp}}=351.15$ K and $P_{\mathrm{sp}}=1.5$ MPa, reaction under the quality band $T_{\mathrm{sp}}\pm 0.7$ K and pressure limit $P_{\max}=1.6$ MPa until $M_{A,\mathrm{total}}=3250$ kg is charged, and finishing with monomer feed stopped, temperature held, and pressure relieved to $P^{\mathrm{end}}=1.0$ MPa.

\begin{figure}[!b]
  \centering
  \includegraphics[width=0.94\columnwidth]{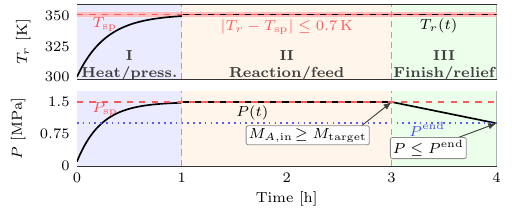}
  \caption{Batch recipe and mode-transition logic.}
  \label{fig:mode-transition}
\end{figure}

The manipulated-input vector is $\mathbf u_{\mathrm{in}}=[u_A,u_B]^\top$, where $u_A$ is the monomer feed to the vessel and $u_B$ is the initiator feed. Monomer first enters the gas phase and then dissolves through pressure-driven gas--liquid transfer; lowering $P_{\mathrm{sp}}$ reduces gas inventory, transfer driving force, and future heat release. Relative to the reduced model, the case adds pressure-mediated monomer entry, split-range utility actuation, sensor and actuator dynamics, and an unmodeled gel-effect kinetic fault.

\subsection{Application of the Optimality-Derived Framework and Practical ARC Realization}
The practical realization uses bounded feed channels $u_A \in [0,0.694]$ kg/s and $u_B \in [0,4.0\times 10^{-7}]$ kg/s, together with the normalized split-range utility signal $v_{\rm sr}\in[-1,1]$, where $v_{\rm sr}=-1$ denotes maximum cooling and $v_{\rm sr}=1$ denotes maximum heating. The reduced ARC architecture uses the virtual-cooling-demand coordinate $F_{\mathrm{cw}}^{\mathrm{v}}$, physical cooling flow $F_{\mathrm{cw}}=\operatorname{sat}(F_{\mathrm{cw}}^{\mathrm{v}};0,F_{\mathrm{cw}}^{\max})$, economic setpoint $F_{\mathrm{cw,sp}}$, and VPC coordinate $e_v=F_{\mathrm{cw,sp}}-F_{\mathrm{cw}}^{\mathrm{v}}$. In the industrial implementation, the temperature cascade outputs $v_{\rm sr}$. On the cooling branch
\[
F_{\mathrm{cw}} = F_{\mathrm{cw}}^{\max}\max(0,-v_{\rm sr}).
\]
Thus smaller $v_{\rm sr}$ means larger cooling demand. The economic loop reads this cooling-side signal and outputs the shared scalar command $\alpha_{\rm eco}$.

Table~\ref{tab:theory_impl_map} gives the correspondence between the ARC synthesis variables and their case-study realization. The implemented VPC error uses the opposite subtraction order from the reduced coordinate defined above because smaller $v_{\rm sr}$ means larger cooling demand.

\begin{table}[!t]
\centering
\caption{Correspondence between ARC synthesis variables and case-study realization.}
\label{tab:theory_impl_map}
\footnotesize
\setlength{\tabcolsep}{2.4pt}
\begin{tabularx}{\columnwidth}{@{}>{\raggedright\arraybackslash}p{0.27\columnwidth}>{\raggedright\arraybackslash}p{0.34\columnwidth}>{\raggedright\arraybackslash}X@{}}
\toprule
\textbf{ARC synthesis variable} & \textbf{Industrial signal} & \textbf{Role} \\
\midrule
$F_{\mathrm{cw}}^{\mathrm{v}}$, $F_{\mathrm{cw,sp}}$ & cooling branch of $v_{\rm sr}$ and target $v_{\rm sr,sp}$ & cooling-demand CV and target \\
$e_v$ & $e_{\rm VPC}=v_{\rm sr}-v_{\rm sr,sp}$ & economic-loop error \\
$\alpha$ & shared command $\alpha_{\rm eco}$ & feed-side command \\
$\mathbf d$ & positive $\alpha_{\rm eco}\!\to u_B$, negative $\alpha_{\rm eco}\!\to P_{\mathrm{sp}}$ & sign-split fast trim and relief \\
$\phi$ & $s_{\mathrm{cool}}$ & cooling ceiling margin \\
\bottomrule
\end{tabularx}
\end{table}

The industrial realization aligns two physical channels, $u_B$ and $P_{\mathrm{sp}}$, with one signed economic command. Positive $\alpha_{\rm eco}$ is mapped to initiator feedback, whereas negative $\alpha_{\rm eco}$ lowers $P_{\mathrm{sp}}$ when the temperature-hold logic is active; the two static maps are therefore sign-split rather than a sequential cascade. This follows the channel physics: $u_B$ is fast but limited, whereas the pressure-setpoint channel is slower and stronger because it changes gas inventory and gas--liquid transfer. The temperature cascade uses back-calculation anti-windup, and the economic VPC clamps the shared integral state to $[-r_-,r_+]=[-1,1]$.

Table~\ref{tab:arc_params} gives the main tuning parameters for the ARC architecture in Figure~\ref{fig:control-structure}.

\begin{table}[!t]
\centering
\caption{Controller tuning parameters for the proposed ARC architecture.}
\label{tab:arc_params}
\footnotesize
\setlength{\tabcolsep}{3.2pt}
\begin{tabularx}{\columnwidth}{@{}>{\raggedright\arraybackslash}X>{\raggedright\arraybackslash}Xccc@{}}
\toprule
\textbf{Loop} & \textbf{CV $\to$ MV} & \textbf{Type} & \textbf{$K_c$} & \textbf{$T_I$ (s)} \\ 
\midrule
Pressure & $P \to u_A$ & PI & 4.0 & 20 \\
Temp. master & $T_r \to T_{j,sp}$ & PI & 35.0 & 300 \\
Temp. slave & $T_j \to v_{\rm sr}$ & P & 2.0 & -- \\
Economic VPC & $v_{\rm sr} \to \alpha_{\rm eco}$ & PI & 8.0 & 40 \\
Initiator map & positive $\alpha_{\rm eco} \to u_B$ & Static & $1.0{\times}10^{-7}$ & -- \\
Pressure map & negative $\alpha_{\rm eco} \to P_{\mathrm{sp}}$ & Static & $1.0{\times}10^{4}$ & -- \\
\bottomrule
\end{tabularx}
\end{table}

In the two static maps, $1.0{\times}10^{-7}$ has units kg~s$^{-1}$ per command unit and $1.0{\times}10^{4}$ has units Pa per command unit. The sign convention, clipping, and enable logic are given explicitly in Supplementary Material, Section~S7.1.

\subsection{Evaluation Setup and Benchmark Scope}
We compare ARC with an implemented nominal-model \textbf{Nonlinear MPC} benchmark adapted from \cite{zhou2026RealtimeNMPC} and an offline \textbf{nominal OCP reference} computed with perfect foresight. The NMPC details are in Supplementary Material, Section S7; in brief, it is a reduced-model EKF-based output-feedback controller with degree-3 Radau collocation, a 30 s update period, and a 30 min prediction horizon. Both implementable controllers use the measured vector $\mathbf{y} = [T_{\rm r}, T_{\rm J}, P, M_{\rm A,in}, M_{\rm B,in}]^\top$. In output-feedback runs, the additive measurement noise on this ordered vector is independent, zero-mean, Gaussian, and has covariance
$\operatorname{diag}\!\left((0.1/3)^2,\ (0.1/3)^2,\ (5.0\times10^3/3)^2,\ 0,\ 0\right),
$
with temperature in K and pressure in Pa. The implemented standard deviations are $0.033$ K for $T_{\rm r}$ and $T_{\rm J}$, $1.67$ kPa for $P$, and zero for the cumulative dosage channels. The nominal-model NMPC reconstructs its internal state through an EKF.

The comparison covers four scenarios: nominal operation (N), mild favorable mismatch (PM$_-$, less exothermic and better heat transfer), aggressive adverse mismatch (PM$_+$, more exothermic and poorer heat transfer), and PM$_+$ with an unmodeled gel-effect fault (F). The PM$_-$ and PM$_+$ cases scale $(\Delta H_{\mathrm{rxn}},UA)$ by $(0.8,1.2)$ and $(1.2,0.8)$ relative to the controller model, respectively. Scenario F adds the auto-acceleration model in Supplementary Material, Section S6.

The mechanism metrics separate economic outcome, boundary proximity, and actuator usage. Batch time measures economic outcome. The cooling ceiling margin
\[
s_{\mathrm{cool}} \;:=\; (F_{\mathrm{cw}}^{\max} - F_{\mathrm{cw}}^{\mathrm{v}}) / F_{\mathrm{cw}}^{\max}
\]
measures distance from the physical cooling limit; smaller values mean closer boundary operation, and negative values mean that virtual cooling demand exceeds physical capacity. Mechanism metrics are evaluated on the \emph{reaction-phase thermal-boundary interval}, the thermally active part of Phase II in Fig.~\ref{fig:mode-transition} corresponding to Remark~\ref{rem:three-stage}. Pressure-relaxation occupancy is the fraction of this interval with \(P_{\mathrm{sp}}<1.5\) MPa; it quantifies the use of a safety action that carries an economic penalty because lowering \(P_{\mathrm{sp}}\) reduces gas inventory, monomer transfer, and future heat release but can slow charging. Nonzero \(u_B\) occupancy measures use of the fast initiator-trimming channel. For the OCP/NMPC comparison, we report batch time, temperature and pressure tracking errors, and the fraction of the batch spent above the temperature constraint after entering the $351$ K hold phase. The PM$_+$ tuning quantities use the same reaction-phase thermal-boundary interval for each design.

\subsection{Mechanism Diagnosis by Ablation}
\label{subsec:ablation}
The ablation study tests the channel roles implied by the ARC synthesis. \emph{Full ARC} lets the economic signal act on both $u_B$ and $P_{\mathrm{sp}}$. \emph{Fixed $P_{\mathrm{sp}}$} removes the pressure-relief channel while retaining feedback-driven $u_B$. \emph{Recipe $u_B$} keeps pressure relief but replaces feedback-driven initiator action with a fixed recipe. \emph{Recipe only} fixes $P_{\mathrm{sp}}$ at 1.5 MPa and uses the same recipe for $u_B$. Table~\ref{tab:ablation_mechanism} reports the resulting batch status, average cooling ceiling margin, pressure-relaxation occupancy, and initiator-actuation occupancy on the reaction-phase thermal-boundary interval. The main conclusion is structural: the fast initiator channel carries nominal boundary seeking, the pressure-setpoint channel becomes necessary under adverse thermal load, and the gel-effect fault requires coordinated use of both channels.

\begin{table}[!t]
   \centering
   \footnotesize
   \caption{ARC ablation summary. Unsafe times are trip times.}
   \label{tab:ablation_mechanism}
   \setlength{\tabcolsep}{2.2pt}
   \begin{tabularx}{\columnwidth}{@{}l>{\raggedright\arraybackslash}X>{\raggedright\arraybackslash}p{0.16\columnwidth}>{\raggedright\arraybackslash}p{0.50\columnwidth}@{}}
      \toprule
      \textbf{Scen.} & \textbf{Variant} & \textbf{Result} & \textbf{Mechanism} \\
      \midrule
      N & Full ARC & S, 3.75 & $\bar s_c=.213$; $P=.0003$; $B=.595$ \\
      N & Fixed $P_{\mathrm{sp}}$ & S, 3.75 & $\bar s_c=.213$; $P=.0000$; $B=.595$ \\
      N & Recipe $u_B$ & S, 4.15 & $\bar s_c=.394$; $P=.0000$; $B=.865$ \\
      N & Recipe only & S, 4.15 & $\bar s_c=.394$; $P=.0000$; $B=.865$ \\
      \midrule
      PM$_+$ & Full ARC & S, 4.29 & $\bar s_c=.058$; $P=.215$; $B=.785$ \\
      PM$_+$ & Fixed $P_{\mathrm{sp}}$ & T, 0.99 & $\bar s_c=--$; $P=.000$; $B=.000$ \\
      PM$_+$ & Recipe $u_B$ & S, 4.52 & $\bar s_c=.011$; $P=.951$; $B=.730$ \\
      PM$_+$ & Recipe only & T, 1.24 & $\bar s_c=--$; $P=.000$; $B=1.000$ \\
      \midrule
      F & Full ARC & S, 4.17 & $\bar s_c=.000$; $P=1.000$; $B=.000$ \\
      F & Fixed $P_{\mathrm{sp}}$ & T, 0.93 & $\bar s_c=--$; $P=.000$; $B=.000$ \\
      F & Recipe $u_B$ & T, 1.79 & $\bar s_c=--$; $P=.885$; $B=.246$ \\
      F & Recipe only & T, 1.18 & $\bar s_c=--$; $P=.000$; $B=1.000$ \\
      \bottomrule
   \end{tabularx}
   \vspace{0.5mm}
   \begin{minipage}{\columnwidth}
   \footnotesize S: safe; T: trip; $\bar s_c$: average cooling ceiling margin $s_{\mathrm{cool}}$; $P$: $P_{\mathrm{sp}}$ relaxation occupancy; $B$: nonzero $u_B$ occupancy.
   \end{minipage}
\end{table}

In Scenario N, removing pressure relief leaves the full-ARC result unchanged, whereas replacing feedback-driven $u_B$ with a recipe slows the batch from 3.75 h to 4.15 h and increases the average cooling ceiling margin from 0.213 to 0.394. Thus the nominal economic benefit is carried mainly by the fast initiator channel, while pressure-setpoint modulation is essentially inactive. In Scenario PM$_+$, fixed $P_{\mathrm{sp}}$ trips at 0.99 h, but pressure relief with recipe $u_B$ remains safe, though slower than full ARC. This identifies \(P_{\mathrm{sp}}\) as the slower, larger-capacity monomer-entry actuator needed under adverse thermal load. In Scenario F, only full ARC remains safe; both single-channel variants trip. The fault response validates the coordinated dual-channel realization selected by the optimality and safety analysis.

\subsection{Benchmark Against OCP and Implemented OF-NMPC}
We compare ARC with two references: an implemented nominal-model output-feedback NMPC benchmark, documented in Supplementary Material, Section S7, and an offline nominal OCP reference. Scenario N includes ideal NMPC state-feedback (SF), noise-free ARC (NF), and practical output-feedback/noisy-measurement (OF) operation; Scenarios PM$_-$, PM$_+$, and F use OF only. Table~\ref{tab:unified_results} reports batch time, tracking accuracy, and the thermal-violation metric $\mathrm{Viol}_T$.
The comparison tests whether the proposed ARC architecture can recover constrained-operation behavior with standard regulatory elements against the implemented nominal-model OF-NMPC baseline. This baseline uses the same temperature and pressure hard constraints as ARC, but uses a nominal kinetic model without disturbance-state augmentation in the adverse mismatch and gel-effect fault cases.

\begin{table}[!t]
	\centering
	\caption{Closed-loop benchmark comparison.}
	\label{tab:unified_results}
	\footnotesize
	\setlength{\tabcolsep}{1.5pt}
	\begin{tabular*}{0.98\columnwidth}{@{\extracolsep{\fill}}llcccc@{}}
		\toprule
		\textbf{Scen.} & \textbf{Controller} & \textbf{$t_f$ (h)} & \textbf{$T$ err (K)} & \textbf{$P$ err (MPa)} & \textbf{Viol$_T$ (\%)} \\
		\midrule
		N & OCP ref. & 3.69 & -- & -- & -- \\
		N & NMPC (SF) & 3.68 & .08/.23 & .06/.50 & 0.00 \\
		N & ARC (NF)  & 3.76 & .04/.15 & -- & 0.00 \\
		N & NMPC (OF) & 3.89 & .20/.72 & .07/.50 & 0.21 \\
		N & ARC (OF)  & 3.75 & .05/.20 & .00/.00 & 0.00 \\
		\midrule
		PM$_-$ & NMPC (OF) & 4.01 & 1.94/11.00 & .06/.50 & 0.00 \\
		PM$_-$ & ARC (OF)  & 3.73 & .04/.13 & .00/.00 & 0.00 \\
        \midrule
        PM$_+$ & NMPC (OF) & 4.58 & .85/1.63 & .07/.50 & 24.8 \\
        PM$_+$ & ARC (OF)  & 4.29 & .04/.22 & .00/.04 & 0.00 \\
		\midrule
		F & NMPC (OF) & -- & --/17.17 & -- & -- \\
		F & ARC (OF)  & 4.16 & .07/.69 & .12/.20 & 0.00 \\
		\bottomrule
	\end{tabular*}
    \vspace{0.6mm}
    \begin{minipage}{\columnwidth}
    \footnotesize SF: NMPC state feedback; NF: noise-free ARC; OF: output feedback/noisy measurements. Error entries are MAE/maximum. Maxima are taken over the full simulated batch, whereas Viol$_T$ is computed only after entry into the temperature-hold phase and uses the threshold $T>T_{\mathrm{sp}}+0.7$ K. ``--'' indicates not applicable or early termination.
    \end{minipage}
\end{table}

\begin{figure*}[!t]
  \centering
  \includegraphics[width=0.9\textwidth]{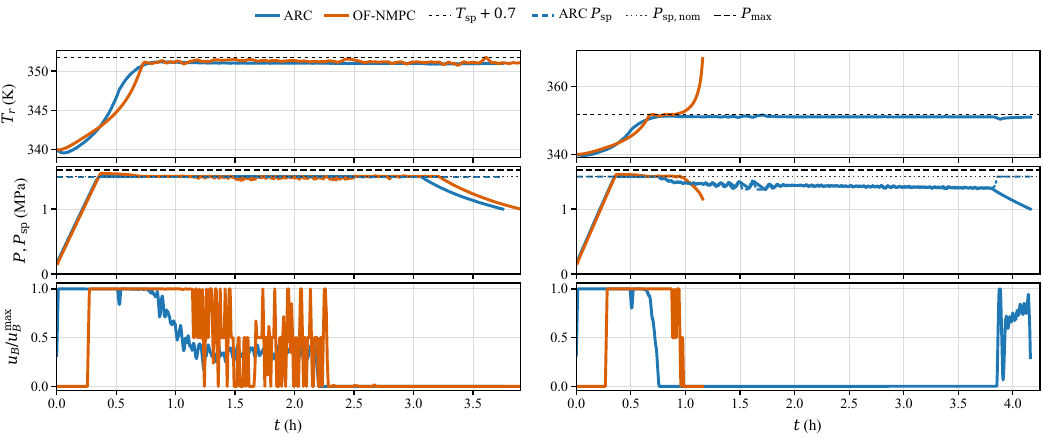}
  \caption{Industrial output-feedback closed-loop trajectories for Scenario N (nominal operation, left) and Scenario F (PM$_+$ with unmodeled gel-effect fault, right): temperature, pressure/pressure setpoint, and 60 s averaged normalized initiator feed.}
  \label{fig:industrial_closed_loop}
\end{figure*}

Figure~\ref{fig:industrial_closed_loop} gives the main trajectory evidence for the implementable comparison. In Scenario N, ARC stays close to the nominal reference and retains performance under output feedback. In the ideal/noise-free comparison, the OCP, NMPC(SF), and ARC(NF) finish in 3.69 h, 3.68 h, and 3.76 h. Under noisy-measurement OF conditions, the implemented nominal-model OF-NMPC benchmark degrades to 3.89 h and a maximum temperature deviation of 0.72 K, whereas ARC remains at 3.75 h with a maximum deviation of 0.20 K.

The mismatch cases expose asymmetric model error. In PM$_-$, where the plant is less exothermic than expected, both controllers remain safe, but OF-NMPC becomes conservative. In PM$_+$, where the plant is more exothermic and harder to cool, OF-NMPC spends 24.8\% of the batch above the temperature limit, whereas ARC maintains 0\% thermal violation and a maximum temperature deviation of 0.22 K. The corresponding ablation in Table~\ref{tab:ablation_mechanism} identifies pressure relaxation as the required relief channel under this adverse thermal load.

In Scenario F, the nominal-model OF-NMPC benchmark undergoes thermal runaway after 1.16 h and reaches 368.17 K. Full ARC completes the batch in 4.16 h with a maximum temperature deviation of 0.69 K. The right column of Figure~\ref{fig:industrial_closed_loop} shows the mechanism: ARC relaxes the pressure target and suppresses initiator feed as cooling demand rises. The failed ablations show that this is a coordinated dual-channel response. Additional nominal, PM$_+$, and zoomed fault trajectories, together with endpoint moment checks for completed full-ARC runs, are reported in Supplementary Material, Section S8.

\subsection{Endpoint-Inequality Diagnosis and Tuning Guidance}
\label{subsec:safety-validation}

The safety analysis is used here as a low-order tuning guide for the PM$_+$ mismatch case, where both economic channels become active. The overload-side feed budget \(J_{w,s}^{\max}\), computed from the feed envelope with \(\tau=50\) s, remains informative in the full industrial loop because it captures feed that can persist under high cooling demand and store future heat release. The remaining indicators, \(e_v^{\min}\), cooling-ceiling occupation, and batch time, are measured from closed-loop runs on the reaction-phase thermal-boundary interval. We compare three realizations around the executed setting: conservative \((F_{\mathrm{cw,sp}},K_P)=(0.90,6.0)\), baseline \((0.95,8.0)\), and aggressive \((0.98,10.0)\). The cooling ceiling margin \(s_{\mathrm{cool}}:=(F_{\mathrm{cw}}^{\max} - F_{\mathrm{cw}}^{\mathrm{v}})/F_{\mathrm{cw}}^{\max}\) is negative when virtual cooling demand exceeds physical capacity; cooling-ceiling occupation is the fraction of this interval with \(s_{\mathrm{cool}}\le 0\).

\begin{table}[htbp]
\centering
\footnotesize
\caption{Three PM$_+$ controller realizations used for tuning diagnosis.}
\label{tab:pmplus_three_designs}
\setlength{\tabcolsep}{2.8pt}
\begin{tabular*}{\columnwidth}{@{\extracolsep{\fill}}lcccc@{}}
\toprule
Design & \(J_{w,s}^{\max}\) & \(e_v^{\min}\) & Occ. (\%) & \(t_f\) (h) \\
\midrule
Cons. \((0.90,6.0)\) & \(8.58{\times}10^3\) & -0.470 & 6.37 & 4.324 \\
Base \((0.95,8.0)\) & \(5.71{\times}10^3\) & -0.581 & 20.71 & 4.265 \\
Aggr. \((0.98,10.0)\) & \(4.57{\times}10^3\) & -0.615 & 38.21 & 4.242 \\
\bottomrule
\end{tabular*}
\end{table}

Table~\ref{tab:pmplus_three_designs} shows why tuning cannot be read from \(J_{w,s}^{\max}\) alone. Across these realizations, larger \(K_P\) reduces feed accumulation when cooling is saturated, while higher \(F_{\mathrm{cw,sp}}\) moves the operating point closer to the cooling-capacity limit. As a result, the cooling-ceiling occupation fraction rises from \(6.37\%\) to \(38.21\%\), and \(e_v^{\min}\) deepens from \(-0.470\) to \(-0.615\). The conservative design has the largest safety margin but the longest batch time. The baseline gives the intended compromise: it is \(212\) s faster than the conservative design, only \(81\) s slower than the aggressive design, and keeps the cooling-ceiling occupation fraction lower than the aggressive case.

The tuning lesson is consistent with Section~\ref{subsec:design-implications}. The setpoint \(F_{\mathrm{cw,sp}}\) should be selected first because it determines the primary safety-economy operating condition. The gain \(K_P\) then governs how aggressively the economic loop withdraws feed when cooling is saturated. In these PM$_+$ runs, the positive-feed recovery component remains inactive, so the useful tuning indicators are \(J_{w,s}^{\max}\), \(e_v^{\min}\), cooling-ceiling occupation, and batch time.

\section{Discussion and Conclusions}
\label{sec:discussion}
\label{sec:conclusions}

This paper contributes a concrete analysis-to-architecture route for ARC synthesis in cooling-limited semi-batch reactors. ARC is the deployable control layer in many chemical plants, but architecture selection often remains a coupled choice of signals, pairings, controller elements, and tuning. Here, minimum-time optimality determines the economic control target and cooling-demand CV, while local safety analysis determines the near-boundary controller elements and tuning requirements. This gives a systematic path from process analysis to ARC structure for the reactor class studied here.

The resulting architecture is organized around virtual cooling demand. During cooling-limited reaction, unused cooling capacity is lost productivity, while excess cooling demand indicates thermal overload. The temperature/quality loop protects the thermal boundary, and the feed-related VPC loop moves future heat release toward the selected cooling-capacity target. The endpoint screen explains why this loop needs feed withdrawal under overload, saturation/projection, bounded integral action, and recovery authority.

The reduced benchmark and industrial polymerization case test the two links from analysis to implementation separately and together. The reduced benchmark checks the boundary-seeking feed profile and endpoint-test values directly. The industrial case shows how the same logic becomes a dual-channel ARC realization: initiator feed gives fast economic adjustment, while pressure relaxation gives slower, higher-authority relief under sustained thermal loading. The comparison with the implemented nominal-model OF-NMPC baseline and the ablation results support the selected structure under the studied nominal, mismatch, and gel-effect fault scenarios.

Together, these results establish a theory-guided ARC synthesis workflow for cooling-limited semi-batch operation. They also point toward broader ARC synthesis methods in which optimality guides CV and pairing choices, while safety analysis guides controller elements and tuning for constrained batch and semi-batch processes.


\clearpage
\bibliographystyle{plainnat}
\bibliography{ref}

\clearpage
\section*{Supplementary Material}
\addcontentsline{toc}{section}{Supplementary Material}
\setcounter{section}{0}
\setcounter{table}{0}
\setcounter{figure}{0}
\setcounter{equation}{0}
\renewcommand{\thesection}{S\arabic{section}}
\renewcommand{\thetable}{S\arabic{table}}
\renewcommand{\thefigure}{S\arabic{figure}}
\renewcommand{\theequation}{S\arabic{equation}}
\makeatletter
\renewcommand{\theHsection}{S\arabic{section}}
\renewcommand{\theHsubsection}{S\arabic{section}.\arabic{subsection}}
\renewcommand{\theHsubsubsection}{S\arabic{section}.\arabic{subsection}.\arabic{subsubsection}}
\renewcommand{\theHtable}{S\arabic{table}}
\renewcommand{\theHfigure}{S\arabic{figure}}
\renewcommand{\theHequation}{S\arabic{equation}}
\makeatother

\section{How to Read This Supplement}
\label{sec:supp-guide}

The main paper is intended to be self-contained at the level of the control argument: it states the structural results, the controller synthesis logic, and the evidence chain. This supplement provides the technical material needed to audit that argument. Sections S2--S4 support the reduced ARC theory and benchmark. Sections S5--S8 give the industrial model, fault model, controller implementation, and additional trajectories.

\subsection{Notation Used in the Supplement}
\label{sec:supp-notation}

Table~\ref{tab:supp-notation} collects the symbols most frequently used in the proof and endpoint-screen sections. Symbols not listed here follow the definitions in the main text.

\begin{table}[htbp]
\caption{Frequently used supplement notation.}
\label{tab:supp-notation}
\centering
\begin{tabular}{p{0.24\textwidth}p{0.66\textwidth}}
\toprule
Symbol & Meaning \\
\midrule
$e_v$ & VPC error; positive values indicate unused cooling capacity relative to the economic target. \\
$\phi$ & Margin to the cooling limit; $\phi<0$ means removable heat capacity remains. \\
$F_{\mathrm{cw}}^{\mathrm{v}}$, $F_{\mathrm{cw,sp}}$ & Virtual cooling demand and its economic setpoint. \\
$\eta_s,\eta_e$ & Overload-side and conservative-side widths of the operating band $\mathcal E=[-\eta_s,\eta_e]$. \\
$\widehat{\mathcal E}$, $\hat\eta_s,\hat\eta_e$ & Inner band used for continuous-time containment between sampled endpoints. \\
$\eta_v^\star$ & Plant-side threshold mapping the VPC coordinate back to the thermal residual. \\
$J_{w,k}$ & Weighted cumulative feed over specified time window for the endpoint screen. \\
$J_{w,s}^{\max},J_{w,e}^{\min}$ & Overload-side feed upper budget and conservative-side recovery budget. \\
$d_\pm,q_\pm$ & Bounds for drift and feed-curvature coefficients in the local window model. \\
$\nu_s^-,\nu_e^+$ & Lower/upper window-start slope bounds at $e_k=-\eta_s$ and $e_k=\eta_e$. \\
$L_e$ & Local derivative bound used in the continuous-time containment estimate. \\
$\Psi$, $Q_\gamma$ & Upper envelope relating $e_v$ to $\phi$, and conditional quantile at level $\gamma$. \\
$v_{\rm sr}$ & Industrial split-range utility signal; lower values correspond to stronger cooling. \\
$P_{\rm sp},u_A,u_B$ & Pressure setpoint, monomer feed, and initiator feed in the industrial controller. \\
\bottomrule
\end{tabular}
\end{table}

\section{Proof Details for the Reduced ARC Synthesis}
\label{sec:supp-proof-details}

This section proves the three formal results used in the main text. The first result gives the boundary-seeking target, the second converts this target into an active-constraint feed structure, and the third gives the endpoint screen used for safety-oriented tuning. Each proof states the result being proved, the assumptions used, and the proof structure before the technical steps.

\subsection{No Sustained Cooled Interior Operation}
\label{sec:supp-proof-temp-boundary}

\noindent\emph{Result being proved.}
Proposition~\ref{prop:temp-boundary} states that an optimal minimum-time batch cannot spend a positive-measure interval with $T<T_{\mathrm{ref}}$ while nonzero cooling is still applied.

\noindent\emph{Assumptions used.}
The proof uses regularity and heat-removal monotonicity from Assumption~\ref{ass:regularity}, the positive thermal-depletion sensitivity from Assumption~\ref{ass:depletion}, and feasible local feed re-timing from Assumption~\ref{ass:dose-slack}.

\noindent\emph{Proof structure.} We proceed by contradiction.
Step 1 applies a local cooling-reduction variation to exploit the available temperature margin and create a positive depletion margin inside an admissible trajectory tube. Step 2 restores the required feed dose on the shortened horizon. Step 3 uses local re-timing feasibility to keep the path constraints satisfied and compares the dose adjustment with the depletion margin.

 Let $(\mathbf u_{\mathrm{in}}^\star ,F_{\mathrm{cw}}^\star ,\mathbf x^\star ,t_f^\star )$ be an optimal solution.
Assume there exists a measurable interval $\mathcal I\subset[0,t_f^\star ]$ of positive measure such that
$T^\star (t)<T_{\mathrm{ref}}(t)$ and $F_{\mathrm{cw}}^\star (t)>0$ on $\mathcal I$.

\noindent\emph{Step 1: lift the temperature and create a depletion margin.}
Choose a compact subinterval $\mathcal I_0\subset\mathcal I$ on which the inequalities $T^\star<T_{\mathrm{ref}}$ and $F_{\mathrm{cw}}^\star>0$ have positive margins. By Assumptions~\ref{ass:regularity} and~\ref{ass:depletion}, there exists a sufficiently small $\delta>0$ such that the perturbed cooling
$\tilde F_{\mathrm{cw}}(t)=F_{\mathrm{cw}}^\star (t)-\delta$ on $\mathcal I_0$ (and $\tilde F_{\mathrm{cw}}=F_{\mathrm{cw}}^\star $ elsewhere) generates a trajectory in the admissible local tube, with $\tilde T(t)\in(T^\star (t),\,T_{\mathrm{ref}}(t)]$ on $\mathcal I_0$, and gives a positive first variation of the integrated depletion functional.

Using the residual inventory $n_R$ from \eqref{eq:residual-reactant}, the trajectory satisfies
\[
\dot n_R(t)= -\chi(\mathbf n(t),T(t)) + w_R^\top \mathbf W_{\mathrm{in}}\mathbf u_{\mathrm{in}}(t).
\]
Thus there exists $\varepsilon_R>0$ and, by continuity of the endpoint map, an arbitrarily small $\tau>0$ such that for the same feed input $\mathbf u_{\mathrm{in}}^\star$ the perturbed trajectory satisfies
\[
\tilde n_R(t_f^\star-\tau)\le n_R^\star(t_f^\star)-\varepsilon_R .
\]

\noindent\emph{Step 2: restore the required dose on the shorter horizon.}
If the batch is terminated at $t_f^\star-\tau$ with the original feed schedule, the total charged amount is short of the required dose by
\[
\Delta \mathbf U := \mathbf U^{\max}-\int_0^{t_f^\star -\tau}\mathbf u_{\mathrm{in}}^\star (t)\,dt \in \mathbb R_{\ge 0}^{n_u}.
\]
For $\tau$ small enough, $\Delta U_i=O(\tau)$ for each feed component. Assumption~\ref{ass:dose-slack} gives, for each dose-constrained feed component $i$, a positive-measure set $\mathcal J_i$ on which $u_{i,\mathrm{in}}^\star(t)\le u_i^{\max}-\delta_u$. Since the shortened batch ends at \(t_f^\star-\tau\), restrict \(\mathcal J_i\) to \(\mathcal J_i\cap[0,t_f^\star-\tau]\) and use the same symbol for this restricted set. For sufficiently small \(\tau\), it still has positive measure. To satisfy the dose constraint \eqref{eq:ct-dose} on the shorter horizon, define
\[
\hat u_{i,\mathrm{in}}(t)=u_{i,\mathrm{in}}^\star (t)+\frac{\Delta U_i}{|\mathcal J_i|}\mathbf 1_{\mathcal J_i}(t),\qquad i=1,\ldots,n_u,
\]
where $\mathbf 1_{\mathcal J_i}$ is the indicator of $\mathcal J_i$. The indicator confines the adjustment to unsaturated feed intervals. Since $\Delta U_i=O(\tau)$, choosing $\tau$ small enough gives $\Delta U_i/|\mathcal J_i|\le \delta_u$, hence $0\le \hat{\mathbf u}_{\mathrm{in}}(t)\le \mathbf u^{\max}$ and
$\int_0^{t_f^\star -\tau}\hat{\mathbf u}_{\mathrm{in}}(t)\,dt=\mathbf U^{\max}$.

\noindent\emph{Step 3: compare the dose adjustment with the depletion margin.}
By Assumption~\ref{ass:dose-slack}, the $O(\tau)$ re-timing adjustment above stays in an admissible local trajectory tube and preserves the path constraints for sufficiently small $\tau$. The additional feed increases $n_R$ by at most $w_R^\top \mathbf W_{\mathrm{in}}\Delta\mathbf U$, which tends to zero as $\tau\to 0$.
Choosing $\tau$ sufficiently small, this increment is dominated by the extra depletion margin $\varepsilon_R$, so the terminal constraint still holds:
$\hat n_R(t_f^\star -\tau)\le n_R^{\min}$.
This constructs a feasible solution with final time $t_f^\star -\tau<t_f^\star $, contradicting optimality.
Therefore Proposition~\ref{prop:temp-boundary} holds.

\subsection{Active-Constraint Feed Structure}
\label{sec:supp-proof-boundary-tv}

\noindent\emph{Result being proved.}
Proposition~\ref{prop:boundary-tv} states that, on the temperature-tracking manifold and for almost every time at which no other state constraint dominates, the pointwise Hamiltonian feed minimization is a linear program over a box intersected with one cooling-capacity half-space. The optimizer is box-determined when cooling margin remains and must satisfy $\phi=0$ when cooling capacity is limiting.

\noindent\emph{Assumptions used.}
The proof uses Assumption~\ref{ass:regularity}, the affine dependence of $\phi$ on feed, and the pointwise Hamiltonian minimization property.

\noindent\emph{Proof structure.}
Step 1 identifies the feasible polytope on the reduced tracking manifold. Step 2 reads the cooling-margin case from the input box. Step 3 translates the cooling-limited half-space into maximum cooling demand. Singular arcs and additional active state constraints are outside this local statement.

\noindent\emph{Step 1: reduce the Hamiltonian minimization to a linear program.}
The pointwise minimization in \eqref{eq:ham-min-tv} is over
\[
\mathcal U_T=
\{\mathbf u\mid
\mathbf 0\le \mathbf u\le \mathbf u^{\max},\ 
\phi(\mathbf n,\mathbf u,t)\le0\}.
\]

\noindent\emph{Step 2: read the cooling-margin case from the input box.}
When the optimizer lies strictly inside the cooling-capacity half-space, the remaining feasible set is locally only the input box. Minimizing the linear form $\boldsymbol{\sigma}^\top\mathbf u$ over the box gives the displayed componentwise bang-bang rule, with the usual singular case when $\sigma_i=0$.

\noindent\emph{Step 3: translate the cooling-limited half-space into cooling capacity.}
When cooling capacity is limiting, the defining equality is $\phi=0$. By \eqref{eq:phi-tv}, this equality is $q_{\mathrm{req}}=q_{\mathrm{ex}}(T_{\mathrm{ref}},F_{\mathrm{cw}}^{\max})$, which is equivalent to $F_{\mathrm{cw}}^{\mathrm{th}}=F_{\mathrm{cw}}^{\max}$ by the inverse definition \eqref{eq:Freq-tv}. This proves Proposition~\ref{prop:boundary-tv}.

\subsection{Endpoint Screen for the Operating Band}
\label{sec:supp-proof-endpoint-screen}

\noindent\emph{Result being proved.}
Theorem~\ref{thm:band-invariance} states that outward endpoint motion is blocked at both edges of $\mathcal E=[-\eta_s,\eta_e]$ when the controller-generated weighted-feed accumulated values satisfy the two endpoint inequalities.

\noindent\emph{Assumptions used.}
The proof uses the local curvature model in Assumption~\ref{ass:curvature}, the boundary slope bounds
\[
e_k=-\eta_s \Longrightarrow \dot e_k \ge \nu_s^-,
\qquad
e_k=\eta_e \Longrightarrow \dot e_k \le \nu_e^+,
\]
and the controller-generated feed-budget bounds
\[
e_k=-\eta_s \Longrightarrow J_{w,k}\le J_{w,s}^{\max},
\qquad
e_k=\eta_e \Longrightarrow J_{w,k}\ge J_{w,e}^{\min}.
\]
The endpoint conditions to be checked are
\[
\tau \nu_s^- + \frac{d_-}{2}\tau^2 - q_+ J_{w,s}^{\max}\ge 0,
\qquad
\tau \nu_e^+ + \frac{d_+}{2}\tau^2 - q_- J_{w,e}^{\min}\le 0.
\]

\noindent\emph{Proof structure.}
Step 1 applies the lower window bound at the overload edge. Step 2 applies the upper window bound at the conservative edge. Step 3 combines the two endpoint checks.

\noindent\emph{Step 1: block outward motion at the overload boundary.}
At the lower boundary $e_k=-\eta_s$, combine~\eqref{eq:window-lower} with the lower-bound slope and feed-budget conditions to obtain
\[
e_{k+1}
\ge
-\eta_s+\tau \nu_s^-+\frac{d_-}{2}\tau^2-q_+J_{w,s}^{\max}.
\]
The overload-side endpoint condition yields $e_{k+1}\ge -\eta_s$.

\noindent\emph{Step 2: block outward motion at the conservative boundary.}
At the upper boundary $e_k=\eta_e$, combine~\eqref{eq:window-upper} with the upper-bound slope and feed-budget conditions to obtain
\[
e_{k+1}
\le
\eta_e+\tau \nu_e^++\frac{d_+}{2}\tau^2-q_-J_{w,e}^{\min}.
\]
The conservative-side endpoint condition yields $e_{k+1}\le \eta_e$.

\noindent\emph{Step 3: combine the two endpoint checks.}
The two inequalities show that a window starting at either endpoint of $\mathcal E=[-\eta_s,\eta_e]$ cannot move outward at the next endpoint. This proves Theorem~\ref{thm:band-invariance}.

\section{Auxiliary Conditions for Endpoint Interpretation}
\label{sec:supp-auxiliary-screens}

The main theorem provides the endpoint screen used in the paper. This section gives the controller-structure facts, the continuous-time containment condition, and the relation between the VPC coordinate $e_v$ and the plant-side thermal residual $\phi$ used for simulation diagnosis and tuning.

\subsection{Controller-Structure Facts: Projected-PI Shutoff and Curvature Model}
\label{sec:supp-pi-curvature}

For the projected PI law \eqref{eq:vpc-law-1d}, the hard integral interval
$[-r_-,r_+]$ is forward invariant by construction. At $r=r_+$, the projected
right-hand side satisfies $\dot r=\min\{K_I e_v,0\}\le0$; at $r=-r_-$, it
satisfies $\dot r=\max\{K_I e_v,0\}\ge0$. Thus the integral contribution remains
bounded whenever it is initialized in the interval. Moreover, if $K_P>0$ and
\[
e_v(t)\le -\frac{r_++\alpha_b}{K_P},
\]
then, for every admissible $r(t)\le r_+$,
\[
\alpha_b+K_P e_v(t)+r(t)
\le
\alpha_b-(r_++\alpha_b)+r_+
=0.
\]
The saturation in \eqref{eq:vpc-law-1d} therefore gives $\alpha(t)=0$. This
model-independent shutoff property follows from the controller structure alone.

The curvature model \eqref{eq:curvature-model} is also local. Since
$e_v=F_{\mathrm{cw,sp}}-F_{\mathrm{cw}}^{\mathrm{v}}$, changes in the virtual
cooling demand reflect the heat-generation rate seen by the temperature loop.
Feed injected at time $t$ first changes reactive inventory, and the reaction
kinetics convert that inventory into future heat release. On the short windows used in the endpoint
screen, this effect is represented by the curvature coefficient $q(t)$, while
$d_0(t)$ absorbs recipe motion, autonomous reaction evolution, and inner-loop
transients. The approximation is meaningful only in the near-boundary operating
regime used for the endpoint screen and must be checked on the reduced benchmark
or calibrated from closed-loop data in the industrial case.

\subsection{Continuous-Time Containment Estimate}
\label{sec:supp-proof-continuous-band}

For any $t\in[t_k,t_{k+1}]$,
\[
|e_v(t)-e_v(t_k)|\le \int_{t_k}^{t}|\dot e_v(s)|\,ds \le L_e\tau.
\]
If $e_v(t_k)\in[-\hat\eta_s,\hat\eta_e]$, then
\[
e_v(t)\ge -\hat\eta_s-L_e\tau \ge -\eta_s,
\qquad
e_v(t)\le \hat\eta_e+L_e\tau \le \eta_e,
\]
which proves $e_v(t)\in\mathcal E$ throughout the window.

\subsection{VPC-to-Thermal-Residual Relation}
\label{sec:supp-continuous-bridge}

The endpoint screen in the main text controls the sampled values $e_v(t_k)$.
To keep the continuous trajectory inside the outer band $\mathcal E=[-\eta_s,\eta_e]$, use a tighter inner band
\[
\widehat{\mathcal E}:=[-\hat\eta_s,\hat\eta_e]\subset\mathcal E
\]
and assume $|\dot e_v(t)|\le L_e$ on each window $[t_k,t_{k+1}]$ of length $\tau$.
If
\[
\hat\eta_s + L_e\tau \le \eta_s,
\qquad
\hat\eta_e + L_e\tau \le \eta_e,
\]
then any trajectory with $e_v(t_k)\in\widehat{\mathcal E}$ at the window endpoints remains in $\mathcal E$ between endpoints.

For plant-side interpretation, relate $e_v$ to the thermal residual $\phi$. Assume that in the thermally active regime $\mathcal R$ there is a nonincreasing upper envelope $\Psi$ such that
\[
\phi(t) \le \Psi(e_v(t)), \qquad t\in\mathcal R.
\]
If $\Psi(\eta_v^\star)<0$, then $e_v(t)\ge\eta_v^\star$ implies $\phi(t)<0$.
Under ideal inner-loop tracking, a model-implied candidate is
\[
\Psi_{\mathrm{mod}}(z)
\;:=\;
\sup_{t\in\mathcal R}
\Big[
q_{\mathrm{ex}}\!\bigl(T_{\mathrm{ref}}(t),\,F_{\mathrm{cw,sp}}-z\bigr)
-q_{\mathrm{ex}}\!\bigl(T_{\mathrm{ref}}(t),\,F_{\mathrm{cw}}^{\max}\bigr)
\Big].
\]
Since $F_{\mathrm{cw}}\mapsto q_{\mathrm{ex}}(T,F_{\mathrm{cw}})$ is increasing, $\Psi_{\mathrm{mod}}$ is nonincreasing in $z$.
In the industrial case study, this relation of the plant-side cooling capacity margin to the virtual cooling demand is calibrated from data. The reported threshold is
\[
\eta_v^\star(\gamma)
:=
\sup\{\,\eta:Q_\gamma(\phi\mid e_v\ge\eta)<0\,\},
\]
where $Q_\gamma(\cdot\mid\cdot)$ denotes the conditional quantile at level $\gamma$.

\section{Reduced-Benchmark Coefficient Estimation}
\label{sec:supp-reduced-benchmark}

This section reports the quantities used in the main-text reduced isothermal verification benchmark. The benchmark signal is
\[
e_v = T_{\max}-T_{\mathrm{cf}}-e_{\mathrm{sp}},
\qquad e_{\mathrm{sp}}=1\ \mathrm{K},
\]
so $e_v$ is the remaining margin to the cooling-failure limit shifted by the chosen controller setpoint margin. In this benchmark the thermal-feasibility threshold is known exactly from the coordinate definition rather than estimated from data: $e_v>-e_{\mathrm{sp}}$ is equivalent to $T_{\mathrm{cf}}<T_{\max}$, so the benchmark analogue of $\eta_v^*$ is the boundary value $-e_{\mathrm{sp}}=-1$ K. The OCP trajectory is computed by direct transcription with $N=100$ uniform intervals, fourth-order Runge--Kutta integration of the reduced model, and CasADi/IPOPT. The thermal-boundary interval is identified automatically from the OCP as the maximal connected arc with interior feed, $0<u^\star<u_{\max}$, and near-active cooling-failure boundary, $T_{\max}-T_{\mathrm{cf}}^\star\le0.5$ K; this gives $t\in[2.4,24.9]$ h. The benchmark working band is chosen as
\[
e_v\in[-\eta_s,\eta_e],
\qquad
\eta_s=0.15\ \mathrm{K},
\qquad
\eta_e=0.05\ \mathrm{K}.
\]
The controller setpoint margin remains $e_{\mathrm{sp}}=1\ \mathrm{K}$. The reported endpoint test uses the single window length $\tau=120$ s.

The reduced coefficients are estimated directly from the reduced model. Along the closed-loop trajectory, analytic differentiation gives the curvature and drift terms on the thermal-boundary interval:
\[
q(t)\in[4.63\times 10^{1},\,6.86\times 10^{2}],
\qquad
d_0(t)\in[-3.56\times 10^{-6},\,-5.88\times 10^{-8}].
\]
Boundary sampling near $e_v=-\eta_s$ and $e_v=\eta_e$ gives the corresponding slope bounds
\[
\nu_s^- = 2.20\times 10^{-3}\ \mathrm{K/s},
\qquad
\nu_e^+ = -2.73\times 10^{-2}\ \mathrm{K/s}.
\]
For the selected window length $\tau=120$ s, both endpoint inequalities hold. Along the executed windows on the thermal-boundary interval, the corresponding endpoint bounds satisfy
\[
e_{k+1}^-\in[-0.0860,\,-0.0595],
\qquad
e_{k+1}^+\in[0.0038,\,0.0138],
\]
while the actual next-endpoint values satisfy
\[
e_{k+1}\in[-0.0287,\,-0.0011].
\]
The continuous executed closed-loop trajectory satisfies
\[
e_v\in[-0.029,\,-0.001]
\]
on the thermal-boundary interval, which lies inside the chosen band $[-0.15,0.05]$. These are the quantities summarized qualitatively in the main text.

\section{Detailed Reactor Modeling}
\label{sec:detailed_model}
This section provides a rigorous description of the reaction network and dynamic model for the industrial semi-batch polymerization case study presented in the main text. The formulation adopts a moment-based approach to capture the evolution of molecular weight distributions alongside the standard mass and energy balances.

\subsection{Reaction Kinetic Scheme}
The kinetic scheme models the free-radical polymerization of monomer A initiated by species B, in the presence of a chain transfer agent C. The set of elementary reactions considered is listed in Table~\ref{table:kinetics}.

\begin{table}[htbp]
	\caption{Free-radical polymerization kinetic scheme.}
	\label{table:kinetics}
	\centering
	\begin{tabular}{l|ll}
		\toprule
		\textbf{Mechanism Step} & \textbf{Reaction Stoichiometry} & \textbf{Rate Constant} \\
		\midrule
		Initiator Decomposition & $\rm B \xrightarrow{\textit{k}_{1}} 2R^*$ & $k_{1}=k_{10}\exp\left({-{E_{1}}/({\rm R}T_{\rm r})}\right)$\\
		Chain Initiation & $\rm A + R^* \xrightarrow{\textit{k}_2} P_1$ &$k_{2}=k_{20}\exp\left({-{E_{2}}/({\rm R}T_{\rm r})}\right)$\\
		Propagation & $\rm P_n + A \xrightarrow{\textit{k}_3} P_{n+1}$ &$k_{3}=k_{30}\exp\left({-{E_{3}}/({\rm R}T_{\rm r})}\right)$\\
		Transfer to Monomer & $\rm P_n + A \xrightarrow{\textit{k}_{4}} D_n + P_1$ &$k_{4}=k_{40}\exp\left({-{E_{4}}/({\rm R}T_{\rm r})}\right)$ \\
		Transfer to Agent C & $\rm P_n + C \xrightarrow{\textit{k}_{5}} D_n + R^*$ &$k_{5}=k_{50}\exp\left({-{E_{5}}/({\rm R}T_{\rm r})}\right)$\\
		Termination & $\rm P_n + P_m \xrightarrow{\textit{k}_7} D_{n+m}$ &$k_{7}=k_{70}\exp\left({-{E_{7}}/({\rm R}T_{\rm r})}\right)$\\
		\bottomrule
	\end{tabular}
\end{table}

\subsection{Moment-Based Dynamic Model}
The system dynamics are described by the leading moments of the live ($\lambda_k$) and dead ($\mu_k$) polymer chain distributions, coupled with the species concentrations and reactor energy balance.

\subsubsection{Polymer Moments and Species Balances}
The evolution of the zeroth, first, and second moments for live ($\lambda$) and dead ($\mu$) chains is governed by:
\begin{align}
    \frac{d(V_{\rm l} \lambda_0)}{dt} &= \left( k_2 c_{\rm A} c_{\rm R} - (k_5 c_{\rm C}) \lambda_0 - k_7 \lambda_0^2 \right) V_{\rm l} \label{eq:lam0} \\
    \frac{d(V_{\rm l} \lambda_1)}{dt} &= \left( k_2 c_{\rm A} c_{\rm R} + (k_3+k_4)c_{\rm A} \lambda_0 - (k_4 c_{\rm A} + k_5 c_{\rm C})\lambda_1 - k_7 \lambda_0 \lambda_1 \right) V_{\rm l} \label{eq:lam1} \\
    \frac{d(V_{\rm l} \lambda_2)}{dt} &= \left( k_2 c_{\rm A} c_{\rm R} + (k_3+k_4)c_{\rm A} \lambda_0 + 2k_3 c_{\rm A} \lambda_1 - (k_4 c_{\rm A} + k_5 c_{\rm C})\lambda_2 - k_7 \lambda_0 \lambda_2 \right) V_{\rm l} \label{eq:lam2} \\
    \frac{d(V_{\rm l} \mu_0)}{dt} &= \left( (k_4 c_{\rm A} + k_5 c_{\rm C})\lambda_0 + \frac{1}{2}k_7 \lambda_0^2 \right) V_{\rm l} \label{eq:mu0} \\
    \frac{d(V_{\rm l} \mu_1)}{dt} &= \left( (k_4 c_{\rm A} + k_5 c_{\rm C})\lambda_1 + \frac{1}{2}k_7 (\lambda_0 \lambda_1 + \lambda_1 \lambda_0) \right) V_{\rm l} \label{eq:mu1} \\
    \frac{d(V_{\rm l} \mu_2)}{dt} &= \left( (k_4 c_{\rm A} + k_5 c_{\rm C})\lambda_2 + \frac{1}{2}k_7 (\lambda_0 \lambda_2 + 2\lambda_1^2 + \lambda_2 \lambda_0) \right) V_{\rm l} \label{eq:mu2}
\end{align}

The concentrations of small molecules ($R^*, A, B, C$) follow:
\begin{align}
    \frac{d(V_{\rm l} c_{\rm R})}{dt} &= \left( 2f k_1 c_{\rm B} - k_2 c_{\rm A} c_{\rm R} + (k_5 c_{\rm C}) \lambda_0 \right) V_{\rm l} \label{eq:cR} \\
    \frac{d(V_{\rm l} c_{\rm A})}{dt} &= 1000 \frac{F_{\rm g2l}}{M_{\rm gA}} - \left( k_2 c_{\rm R} + (k_3+k_4)\lambda_0 \right) c_{\rm A} V_{\rm l} \label{eq:cA} \\
    \frac{d(V_{\rm l} c_{\rm B})}{dt} &= 1000 \frac{F_{\rm B}}{M_{\rm gB}} - 2f k_1 c_{\rm B} V_{\rm l} \label{eq:cB} \\
    \frac{d(V_{\rm l} c_{\rm C})}{dt} &= - k_5 \lambda_0 c_{\rm C} V_{\rm l} \label{eq:cC}
\end{align}
where $F_{g2l}$ denotes the mass transfer rate of monomer A from gas to liquid phase.

\subsubsection{Thermal and Phase Equilibrium}
The reactor temperature $T_{\rm r}$, jacket temperature $T_{\rm J}$, and pressure $P$ dynamics incorporate the heat of polymerization, heat exchange with the jacket, and gas-liquid mass transfer:
\begin{align}
    M_{\rm tot} C_p \frac{dT_{\rm r}}{dt} &= F_{\rm A} C_{p\rm A}(T_{\rm in} - T_{\rm r}) + UA(T_{\rm J} - T_{\rm r}) + (-\Delta H_{\rm rxn}) R_{\rm poly} V_{\rm l} - Q_{\rm loss} + Q_{\rm stir} \label{eq:Tr} \\
    M_{\rm J} C_{p,w} \frac{dT_{\rm J}}{dt} &= F_{\rm cw}\,C_{p,w}(T_{\rm cold} - T_{\rm J}) + F_{\rm hw}\,C_{p,w}(T_{\rm hot} - T_{\rm J}) - UA(T_{\rm J} - T_{\rm r}) \label{eq:TJ} \\
    \frac{dN_{\rm gA}}{dt} &= \frac{1000}{M_{\rm gA}} (F_{\rm A} - F_{\rm g2l}) \\
    \frac{dP}{dt} &= \frac{R}{V_{\rm g}} \left( T_{\rm r} \frac{dN_{\rm gA}}{dt} + N_{\rm gA} \frac{dT_{\rm r}}{dt} \right)
\end{align}
where $M_{\rm J}$ is the jacket water holdup (kg), $C_{p,w}$ the water heat capacity (J/(kg\,K)), $F_{\rm cw}$ and $F_{\rm hw}$ are the cold and hot jacket water flow rates (kg/s) determined by the normalized split-range utility signal $v_{\rm sr}\in[-1,1]$:
\begin{align}
    F_{\rm cw} &= F_{\rm cw}^{\max}\;\max\!\bigl(0,\,-v_{\rm sr}\bigr), &
    F_{\rm hw} &= F_{\rm hw}^{\max}\;\max\!\bigl(0,\,v_{\rm sr}\bigr). \label{eq:split-range}
\end{align}
$T_{\rm cold}$ and $T_{\rm hot}$ denote the cold and hot utility inlet temperatures, respectively.
The notation $v_{\rm sr}$ is reserved for the split-range utility signal. It is distinct from the economic-loop feed intensity $\alpha(t)$ used in the main text.

The gas-liquid mass transfer is driven by the difference between partial pressure and Henry's law saturation:
\begin{equation}
    F_{\rm g2l} = k_{\rm mt} (P H - 1000 c_{\rm A})
\end{equation}

\subsection{Model Parameters}
The validated parameters for the industrial polymerization case study are listed in Table~\ref{tab:para}. The values of $\Delta H_{\rm rxn}$ and $UA$ in \textbf{bold} correspond to the ``true plant'' used in Scenarios PM$_+$ and F: $\Delta H_{\rm rxn}^{\rm plant}=1.2\times\Delta H_{\rm rxn}^{\rm model}$ and $UA^{\rm plant}=0.8\times UA^{\rm model}$, with nominal model values $\Delta H_{\rm rxn}^{\rm model}=272700$~J/mol and $UA^{\rm model}=26870$~W/K.

\begin{table}[htbp]
	\centering
	\caption{Kinetic and physicochemical parameters used in validation simulations.}
	\label{tab:para}
	\begin{tabular}{llll}
		\toprule
		{Parameter} & {Description} & {Value} & {Units} \\ 
		\midrule
		$V_{tot}$         & Total reactor volume & 6000 & L \\
		$-\Delta H$       & Heat of polymerization & \textbf{327240} & J/mol \\
		$UA$              & Heat transfer product & \textbf{21496} & W/K \\
		$C_{p,mix}$       & Specific heat capacity (Reacting Mix) & 4200 & J/(kg K) \\
		$C_{p,A}$         & Specific heat capacity (Feed A) & 804 & J/(kg K) \\
		$H$               & Henry's Constant & $1/700$ & mol/(Pa m$^3$) \\
		$k_{mt}$          & Mass transfer coeff & 0.0562 & - \\
		\midrule
		$k_{10}$          & Pre-exponential (Decomposition) & 1.13E+17 & 1/s \\
		$k_{20}$          & Pre-exponential (Initiation) & 3.62E+15 & - \\
		$k_{30}$          & Pre-exponential (Propagation) & 5.49E+07 & - \\
		$k_{40}$          & Pre-exponential (Transfer to Monomer) & 9.96E+07 & - \\
		$k_{50}$          & Pre-exponential (Transfer to Agent) & 3.15E+06 & - \\
		$k_{70}$          & Pre-exponential (Termination) & 3.38E+09 & - \\
		$E_1$             & Activation Energy (Decomp) & 1.35E+05 & J/mol \\
		$E_2$             & Activation Energy (Init) & 119715 & J/mol \\
		$E_3$             & Activation Energy (Prop) & 17413.76 & J/mol \\
		$E_4$             & Activation Energy (Transfer to Mon) & 53020 & J/mol \\
		$E_5$             & Activation Energy (Transfer to Agent) & 20000 & J/mol \\
		$E_7$             & Activation Energy (Term) & 13604.5 & J/mol \\
		\midrule
		$M_{gA}$          & Molar mass A & 100 & g/mol \\
		$M_{gB}$          & Molar mass B & 228 & g/mol \\
		\midrule
		$M_{\rm J}$       & Jacket water holdup & 2000 & kg \\
		$C_{p,w}$         & Water heat capacity & 4200 & J/(kg K) \\
		$T_{\rm cold}$    & Cold utility inlet & 283 & K \\
		$T_{\rm hot}$     & Hot utility inlet & 363 & K \\
		$F_{\rm cw}^{\max}$ & Max cold water flow & 10 & kg/s \\
		$F_{\rm hw}^{\max}$ & Max hot water flow & 1 & kg/s \\
		\bottomrule
	\end{tabular}
\end{table}

\section{Plant-Model Mismatch: The Gel Effect}
\label{sec:mismatch}
To evaluate the safety response of the proposed ARC structure, a plant-model mismatch scenario is introduced. The ``real'' plant exhibits a strong auto-acceleration (gel effect), modeled using free-volume theory. The model adds one auxiliary state and modifies the termination rate constant $k_7$ as follows.

\subsection{Free Volume and Conversion}
Define the monomer conversion $X$ and polymer volume fraction $\phi_p$:
\begin{align}
    X &= \frac{c_{A,0}\,V_{l,0} + M_{A,l}\,(1000/M_{gA}) - c_A\,V_l}{c_{A,0}\,V_l + M_{A,l}\,(1000/M_{gA})}, \\
    \phi_p &= \frac{(1+\varepsilon_v)\,X}{1+\varepsilon_v\,X}, \qquad
    \varepsilon_v = \frac{0.966471 - 1.164\times 10^{-3}\,T_{\rm r}}{1.19504 - 3.3\times 10^{-4}\,T_{\rm r}},
\end{align}
where $M_{A,l}$ is the cumulative mass of monomer A dissolved into the liquid phase.
The free volume fraction is
\begin{equation}
    V_f = 0.025 + a_p\,(T_{\rm r} - T_{\rm gp})\,\phi_p + a_m\,(T_{\rm r} - T_{\rm gm})\,(1 - \phi_p).
\end{equation}

\subsection{Gel Onset Criterion}
The Trommsdorff criterion compares a diffusion-weighted index $K$ against a threshold $K_3$:
\begin{align}
    K &= \left(\frac{M_{gA}\,(\mu_2+\lambda_2)}{\mu_1+\lambda_1}\right)^{0.5}\!\exp\!\left(\frac{p_A}{V_f}\right), &
    K_3 &= \exp\!\left(-0.4 + \frac{4460}{T_{\rm r}}\right).
\end{align}
When $K \ge K_3$, the system enters the gel regime.

\subsection{Modified Termination Rate}
The Gel Effect introduces an auxiliary tracking state $\overline{M}_{w,\mathrm{cr}}$ that latches the weight-average molecular weight at the onset of the gel transition:
\begin{equation}
    \frac{d\overline{M}_{w,\mathrm{cr}}}{dt} =
    \begin{cases}
        \displaystyle\frac{M_w(t) - \overline{M}_{w,\mathrm{cr}}}{\tau_{\mathrm{cr}}}, & K < K_3, \\[6pt]
        0, & K \ge K_3,
    \end{cases}
    \label{eq:Mwcr_track}
\end{equation}
where $M_w(t) = M_{gA}\,(\mu_2+\lambda_2)/(\mu_1+\lambda_1)$ is the instantaneous weight-average molecular weight and $\tau_{\mathrm{cr}} = 10^{-4}$~s is a fast tracking time constant. The effective termination rate constant is then
\begin{equation}
    k_7^{\mathrm{eff}} =
    \begin{cases}
        \displaystyle k_7\,\exp\!\left(-p_A\,\max\!\left(\frac{1}{V_f}-\frac{1}{V_{f,\mathrm{cr}}},\,0\right)\right)
        \left(\frac{\overline{M}_{w,\mathrm{cr}}}{M_w(t)}\right)^{1.75}, & K \ge K_3, \\[8pt]
        k_7, & K < K_3.
    \end{cases}
    \label{eq:gel_kt}
\end{equation}
The first factor captures the free-volume diffusion limitation, and the second factor introduces a molecular-weight-dependent auto-acceleration: as $M_w$ grows beyond the onset value $\overline{M}_{w,\mathrm{cr}}$, termination is further suppressed, creating a positive feedback loop that can trigger thermal runaway.

\subsection{Gel Effect Parameters}
\begin{center}
\begin{tabular}{llll}
\toprule
\textbf{Symbol} & \textbf{Description} & \textbf{Value} & \textbf{Units} \\
\midrule
$V_{f,\mathrm{cr}}$ & Critical free volume & 0.129 & -- \\
$p_A$ & Diffusion exponent & 1.11 & -- \\
$T_{\rm gp}$ & Glass transition (polymer) & 383 & K \\
$T_{\rm gm}$ & Glass transition (monomer) & 167 & K \\
$a_p$ & Expansion coeff (polymer) & $4.8\times 10^{-4}$ & K$^{-1}$ \\
$a_m$ & Expansion coeff (monomer) & $1.0\times 10^{-3}$ & K$^{-1}$ \\
$\tau_{\mathrm{cr}}$ & Tracking time constant & $10^{-4}$ & s \\
\bottomrule
\end{tabular}
\end{center}

\section{Control Implementation Details}
\label{sec:control_impl}

\subsection{ARC Anti-Windup and Signal Conventions}
The industrial-case ARC implementation uses two distinct anti-windup mechanisms, consistent with the main-text correspondence between ARC synthesis variables and the case-study realization. The temperature cascade uses back-calculation anti-windup:
\begin{align}
    u_{I,k} &= u_{I,k-1} + \frac{K_p T_s}{T_i} e_k + \frac{T_s}{T_t} (u_{sat, k-1} - u_{calc, k-1}) \\
    u_{sat, k} &= \text{clip}(u_{calc, k}, u_{min}, u_{max})
\end{align}
This structure is used in the tracking loop because the jacket-temperature setpoint and split-range utility signal can saturate during startup and large disturbances.

The economic VPC does \emph{not} use back-calculation. Instead, the shared economic integral state is updated with explicit clamping:
\begin{align}
    ui_{{\rm VPC},k} &= \text{clip}\!\left(ui_{{\rm VPC},k-1} + \frac{K_{\rm VPC} T_s}{T_{i,{\rm VPC}}} e_{{\rm VPC},k},\ ui_{\rm VPC}^{\min},\ ui_{\rm VPC}^{\max}\right), \\
    val_{{\rm FB},k} &= \text{clip}\!\left(K_{\rm VPC} e_{{\rm VPC},k} + ui_{{\rm VPC},k},\ val_{\rm FB}^{\min},\ val_{\rm FB}^{\max}\right).
\end{align}
This is the implemented realization of the clamped integral state discussed in the main text. In the code-aligned industrial-case coordinate, the economic error is written on the normalized utility-demand signal,
\[
e_{\rm VPC} = v_{\rm sr} - v_{{\rm sr},sp},
\]
which uses the opposite subtraction order from the reduced coordinate because lower $v_{\rm sr}$ means stronger cooling. The same shared output $val_{\rm FB}$ drives the fast initiator channel and the pressure-setpoint relaxation channel through the two static maps listed as the initiator and pressure maps in main-text Table~\ref{tab:arc_params}.

Let $v_k:=val_{{\rm FB},k}$, $[v]_+ := \max(v,0)$, and $[v]_- := \min(v,0)$. The full-ARC implementation uses the code-aligned static maps
\begin{align}
    u_{B,k}^{\rm fb} &= \text{clip}\!\left(10^{-7}[v_k]_+,\ 0,\ u_B^{\max}\right), \\
    P_{{\rm sp},k} &=
    \begin{cases}
    \text{clip}\!\left(P_{\rm sp}^{\rm nom}+10^4[v_k]_-,\ 1.1~\mathrm{MPa},\ P_{\rm sp}^{\rm nom}\right), & \text{temperature hold active},\\
    P_{\rm sp}^{\rm nom}, & \text{otherwise},
    \end{cases}
\end{align}
with $P_{\rm sp}^{\rm nom}=1.5~\mathrm{MPa}$ and $v_k$ already clipped to $[-40,10]$ by the VPC update. Thus positive shared command increases the feedback initiator feed, whereas negative shared command lowers the pressure setpoint. The pressure map is disabled in the fixed-$P_{\rm sp}$ ablation, and $u_{B,k}^{\rm fb}$ is replaced by the recipe map in the recipe-$u_B$ ablation; after the initiator dose limit is reached, $u_B$ is set to zero. The implemented actuator commands are then passed through the actuator lags and rate limits used in the case-study simulation.

\subsection{NMPC Formulation and Smooth Mode Blending}
The NMPC benchmark is the implemented nominal-model output-feedback controller used in the main-text comparison. It uses the reduced case-study model for prediction and state estimation, while the plant simulations use the full case-study model. In implementable mode, the controller updates every 30~s, uses a 30~min prediction horizon ($N=60$ control intervals), and solves the nonlinear program with CasADi/IPOPT and degree-3 Radau direct collocation. The output-feedback implementation reconstructs the reduced model state with an extended Kalman filter (EKF) from the measured variables $[T_r,\ P,\ T_J,\ M_{A,\mathrm{in}},\ M_{B,\mathrm{in}}]^\top$. This defines a practical nominal-model OF-NMPC reference; offset-free, robust, adaptive, or scenario-retuned NMPC designs are outside the present benchmark set.

\subsubsection{Optimization Problem}
\begin{subequations}
\begin{align}
    \min_{\mathbf{u}(\cdot)} \quad & J = \int_{t_k}^{t_k+N \Delta t} L(\mathbf{x}(\tau), \mathbf{u}(\tau)) \, d\tau \\
    \text{s.t.} \quad & \dot{\mathbf{x}} = \mathbf{f}(\mathbf{x}, \mathbf{u}, \mathbf{d}), \quad \mathbf{x}(t_k) = \hat{\mathbf{x}}_k \\
    & \mathbf{u}_{\min} \le \mathbf{u}(\tau) \le \mathbf{u}_{\max} \\
    & T_r(\tau) \le T_{\rm sp}+0.7~\mathrm{K}, \quad P(\tau) \le 1.6~\mathrm{MPa}
\end{align}
\end{subequations}
where $N=60$ is the number of control intervals, $\Delta t = 30$~s is the controller update period, and the prediction horizon is therefore 30~min. The temperature bound is the same recipe quality upper bound used for the main-text violation metric.

\subsubsection{Stage Cost and Smoothing}
The stage cost $L$ blends the objectives of the feeding phase (production and pressure tracking) and the finishing phase (temperature holding) using a logistic blending factor $\omega(M_{A,\mathrm{in}})$:
\begin{equation}
    L(\mathbf{x}, \mathbf{u}) = \underbrace{\left( -w_{\rm prod} R_A + w_P \left(\frac{P}{P_{\rm sp}} - 1\right)^2 \right)}_{\text{feeding cost}} \omega(M_{A,\mathrm{in}}) + \underbrace{w_T (T_r - T_{\rm sp})^2}_{\text{temperature cost}}.
\end{equation}

\subsubsection{Smooth Switching Function}
The transition between phases is governed by the accumulated monomer mass $M_{A,\mathrm{in}}$. We define the blending factor $\omega \in [0, 1]$ using a logistic function:
\begin{equation}
    \omega(M_{A,\mathrm{in}}) = \frac{1}{1 + \exp\left( -\lambda_{\rm sw} (M_{A,\max} - M_{A,\mathrm{in}}) + \delta_{\rm sw} \right) }.
\end{equation}
\begin{itemize}
    \item \textbf{Transition Steepness}: $\lambda_{\rm sw} = 10$.
    \item \textbf{Shift Parameter}: $\delta_{\rm sw} = 10$.
    \item \textbf{Switching Threshold}: $M_{A,\max} = 3250$ kg.
\end{itemize}
This function smoothly deactivates the production and pressure objectives as the monomer dosage approaches its target, avoiding discontinuities in the gradient-based optimization.

\subsubsection{Weights}
The weights used in the cost function are:
\begin{itemize}
    \item \textbf{Production}: $w_{\rm prod} = 1.0$ (Maximizing Rate).
    \item \textbf{Pressure}: $w_P = \frac{3 \times 10^5}{360} \approx 833$.
    \item \textbf{Temperature}: $w_T = \frac{3 \times 10^2}{360} \approx 0.83$.
\end{itemize}

\subsubsection{Estimator, Noise, and Numerical Details}
The implementable NMPC benchmark uses an EKF on the reduced model with code-level process covariance
\[
Q=\operatorname{diag}\!\bigl(10^{-6},\,10^{-6},\,1,\,(3\times 10^{-5})^2,\,0.01,\,0.1,\,0.04,\,5\times 10^3/3,\,0.01,\,10^{-6},\,10^{-6},\,1\bigr),
\]
measurement covariance $R=\operatorname{diag}(\boldsymbol{\rho}^2)$, and
\[
\boldsymbol{\rho}=\left[\frac{0.1}{3},\ \frac{5\times 10^3}{3},\ \frac{0.1}{3},\ \frac{0.1}{3},\ \frac{10^{-6}}{3}\right]^\top.
\]
The nominal measurement-noise settings are 0.1~K for the temperature channels and 5~kPa for the pressure channel. The NLP is warm-started from the previous solution, and the implementation uses a sensitivity-based online update followed by a background IPOPT solve. The practical comparison therefore reflects this specific implemented nominal-model OF-NMPC workflow.


\section{Additional Industrial Trajectory Results}
\label{sec:supp-industrial-trajectories}

Figure~\ref{fig:supp-scenario-n} shows the Scenario N trajectories used to support the nominal comparison in the main text. In the ideal/noise-free comparison, the OCP, NMPC(SF), and ARC(NF) finish in 3.69 h, 3.68 h, and 3.76 h, respectively. Under output feedback/noisy measurements, the implemented nominal-model OF-NMPC benchmark takes 3.89 h, whereas ARC remains at 3.75 h with smaller temperature deviation.

Table~\ref{tab:supp-endpoint-moments} reports endpoint molecular-weight moment checks for the completed full-ARC runs. The values are computed from the completed full-ARC simulation trajectories using the model output equations
\[
M_n=M_{gA}\frac{\mu_1+\lambda_1}{\mu_0+\lambda_0},\qquad
M_w=M_{gA}\frac{\mu_2+\lambda_2}{\mu_1+\lambda_1},\qquad
\mathrm{PDI}=M_w/M_n .
\]
This moment check verifies that the safe full-ARC runs finish in the same molecular-weight regime across the nominal, mismatch, and fault cases.

\begin{table}[htbp]
\centering
\caption{Endpoint moment checks for completed full-ARC runs.}
\label{tab:supp-endpoint-moments}
\begin{tabular}{lcccc}
\toprule
Scenario & $t_f$ (h) & $M_n$ ($10^6$ g/mol) & $M_w$ ($10^6$ g/mol) & PDI \\
\midrule
N & 3.746 & 3.18 & 6.38 & 2.004 \\
PM$_+$ & 4.294 & 3.18 & 6.38 & 2.005 \\
F & 4.168 & 3.02 & 6.04 & 2.002 \\
\bottomrule
\end{tabular}
\end{table}

\begin{figure}[!t]
   \centering
   \includegraphics[width=0.49\textwidth]{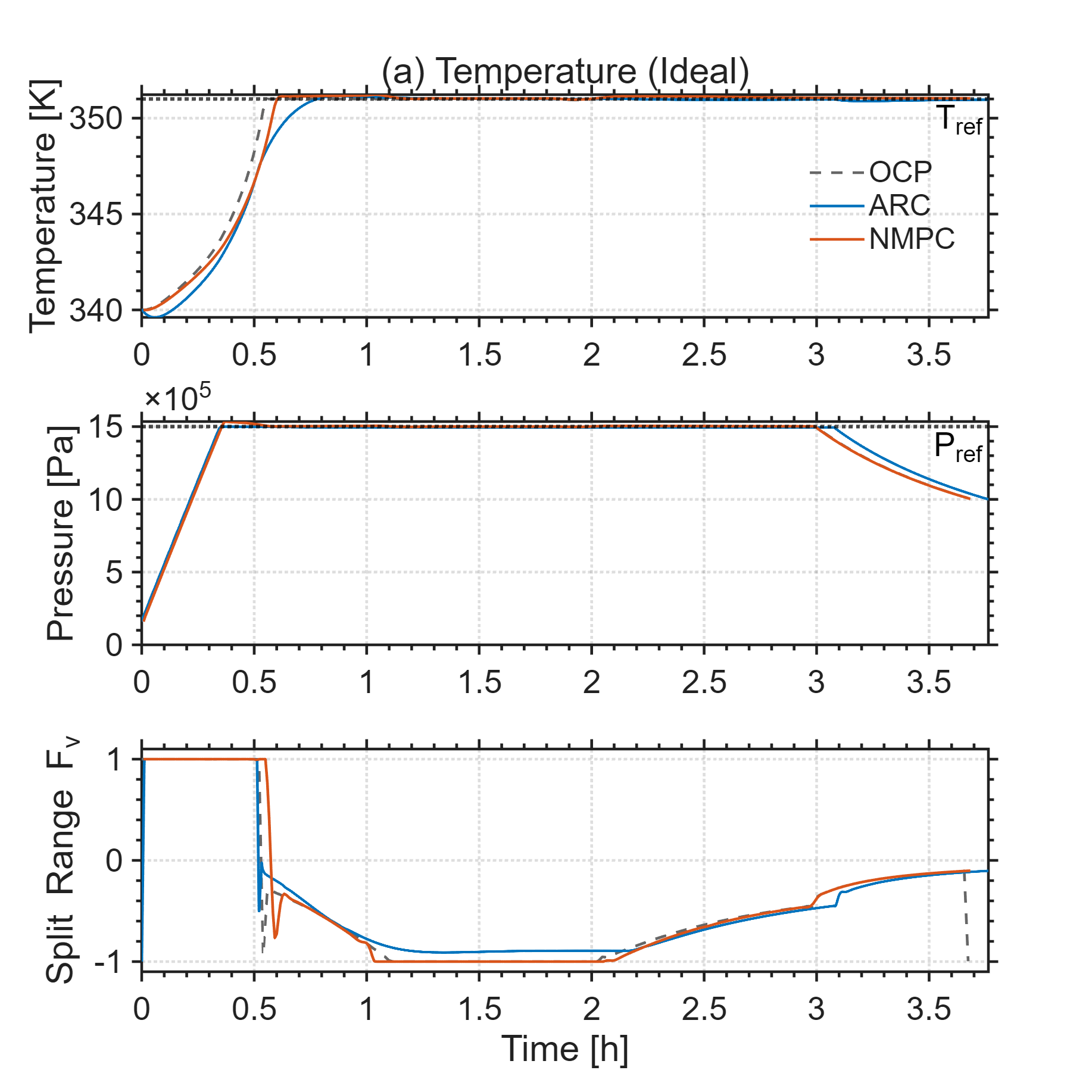}
   \hfill
   \includegraphics[width=0.49\textwidth]{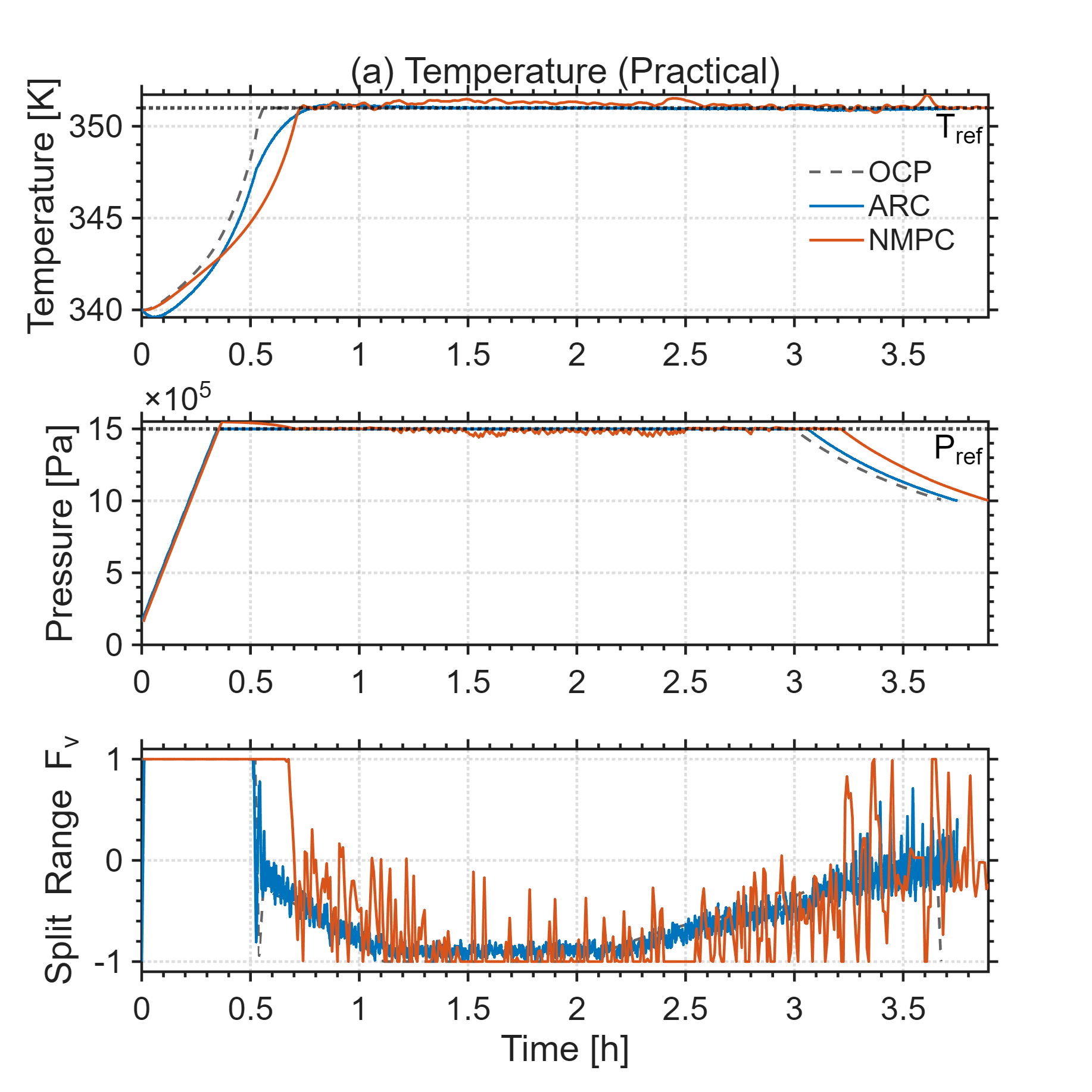}
   \caption{Scenario N trajectory comparison. Left: ideal NMPC state feedback and noise-free ARC operation. Right: practical output-feedback/noisy-measurement operation.}
   \label{fig:supp-scenario-n}
\end{figure}

Figure~\ref{fig:supp-scenario-pmplus} shows the adverse PM$_+$ mismatch trajectories used in the main-text benchmark and tuning diagnosis. ARC keeps the reactor temperature below the quality limit while using modest pressure-setpoint relaxation and reduced initiator action as cooling demand rises. The nominal-model OF-NMPC benchmark remains feasible but spends part of the hold phase above the temperature limit, consistent with the main-text violation metric.

\begin{figure}[!t]
   \centering
   \includegraphics[width=0.96\textwidth]{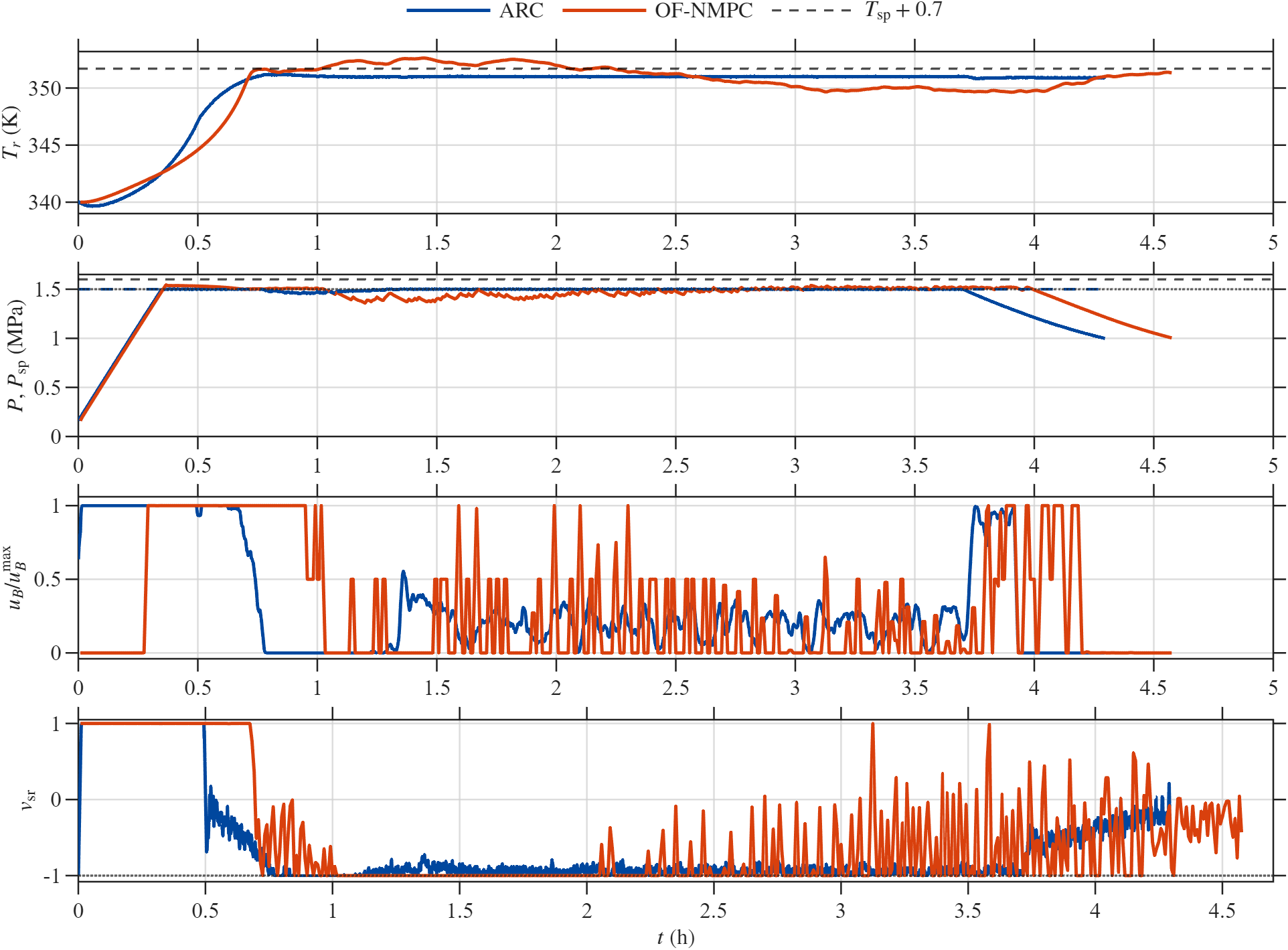}
   \caption{Scenario PM$_+$ practical output-feedback/noisy-measurement trajectory comparison: reactor temperature, pressure and ARC pressure setpoint, normalized initiator feed, and split-range utility signal.}
   \label{fig:supp-scenario-pmplus}
\end{figure}

\clearpage

Figure~\ref{fig:supp-scenario-f} shows the fault response under the combined PM$_+$ and gel-effect scenario. The ARC response relaxes the pressure target and suppresses the initiator channel before the temperature limit is crossed. The implemented nominal-model OF-NMPC benchmark lacks this fault mode in its prediction model and undergoes thermal runaway after 1.16 h.

\begin{figure}[!h]
   \centering
   \includegraphics[width=0.49\textwidth]{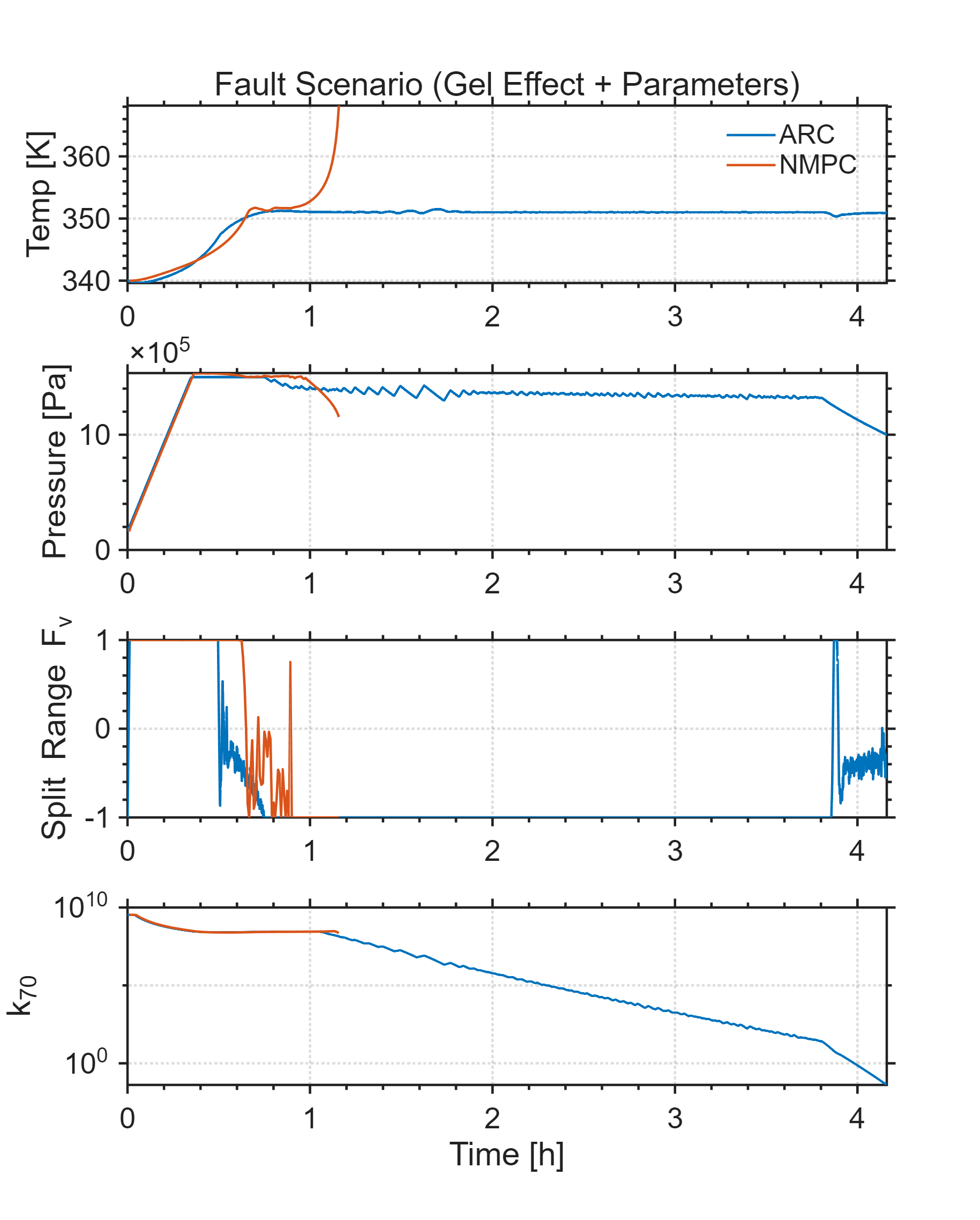}
   \hfill
   \includegraphics[width=0.49\textwidth]{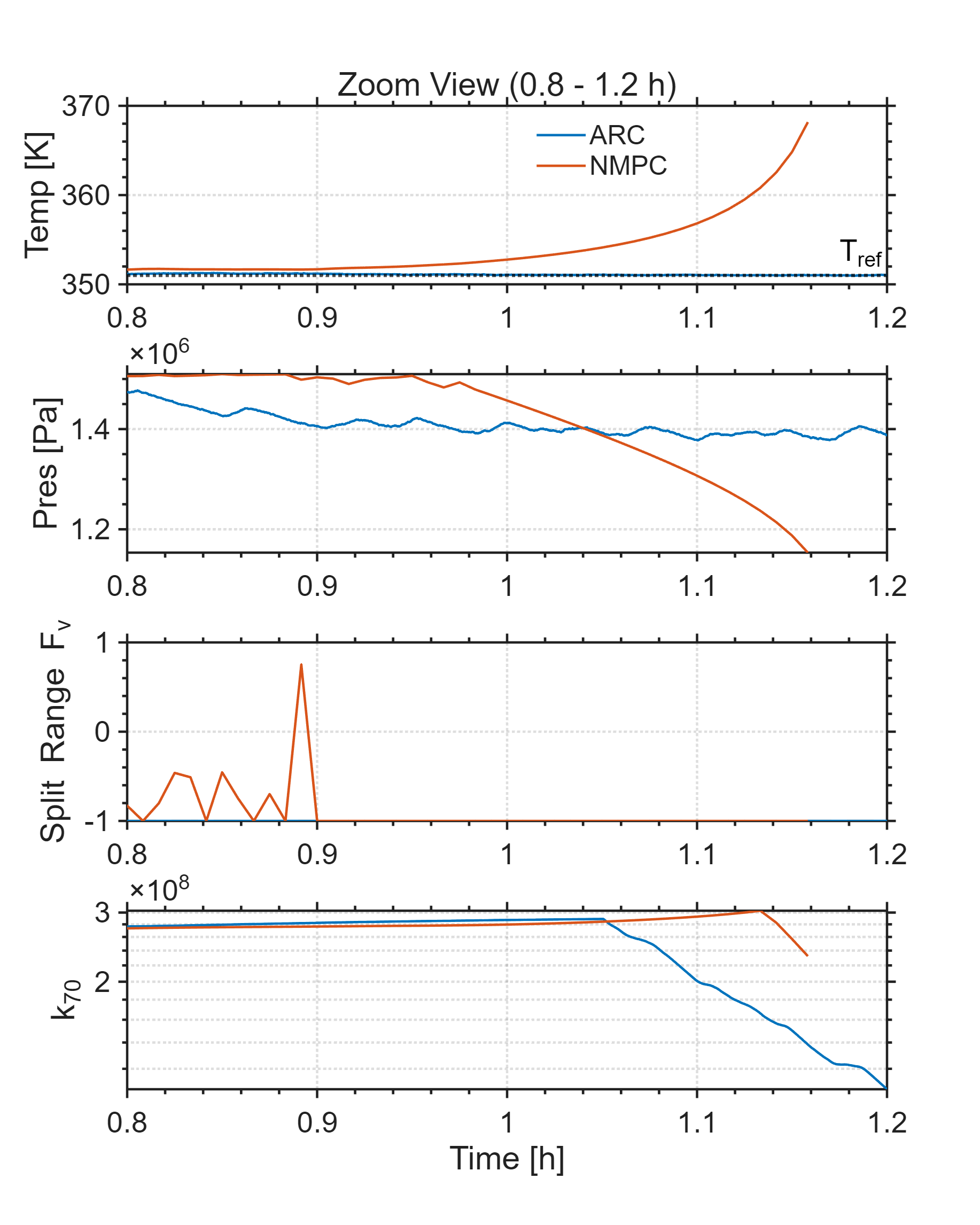}
   \caption{Scenario F trajectory comparison. Left: full trajectory. Right: zoom on $t=0.8$--$1.2$ h, where the gel-effect heat-release spike emerges.}
   \label{fig:supp-scenario-f}
\end{figure}

\end{document}